\newcommand {\Rahul} {Rahul Jain}
\newcommand {\Attila} {Attila Pereszl\'{e}nyi}
\newcommand {\Penghui} {Penghui Yao}
\newcommand {\CQTCS} {
    Centre for Quantum Technologies and Department of Computer Science,
    National University of Singapore.
}
\newcommand {\CQT} {
    Centre for Quantum Technologies,
    National University of Singapore.
}
\newcommand {\PdfTitle} {A direct product theorem for  bounded-round
    public-coin randomized communication complexity}
\theoremstyle{plain}
\newtheorem{theorem}{Theorem}[section]
\newtheorem{lemma}[theorem]{Lemma}
\newtheorem{cor}[theorem]{Corollary}
\newtheorem*{maintheorem}{Theorem~\ref{thm:direct product}}
\theoremstyle{definition}
\newtheorem{definition}[theorem]{Definition}
\newtheorem{claim}[theorem]{Claim}
\newtheorem{fact}[theorem]{Fact}
\newenvironment{proofof}[1]{\noindent{\bf Proof of #1:}}{\qed\\}
\newcommand{\defeq}{\stackrel{\mathrm{def}}{=}}
\def\({\left(}
\def\){\right)}
\def\X{\mathcal{X}}
\def\Y{\mathcal{Y}}
\def\Z{\mathcal{Z}}
\def\R{\mathcal{R}}
\def\P{\mathcal{P}}
\def\Q{\mathcal{Q}}
\def\S{\mathcal{S}}
\def\T{\mathcal{T}}
\def\O{\mathcal{O}}
\def\ve{{\varepsilon}}
\newcommand{\suppress}[1]{}
\newcommand {\br} [1] {\ensuremath{ \left( #1 \right) }}
\newcommand {\Br} [1] {\ensuremath{ \left[ #1 \right] }}
\newcommand {\set} [1] {\ensuremath{ \left\lbrace #1 \right\rbrace }}
\newcommand {\minusspace} {\: \! \!}
\newcommand {\smallspace} {\: \!}
\newcommand {\fn} [2] {\ensuremath{ #1 \minusspace \br{ #2 } }}
\newcommand {\Fn} [2] {\ensuremath{ #1 \minusspace \Br{ #2 } }}
\newcommand {\fndec} [3] {\ensuremath{ #1 : \, #2 \rightarrow #3 }}
\newcommand {\mutinf} [2] {\fn{\mathrm{I}}{#1 \smallspace ; \smallspace #2}}
\newcommand {\condmutinf} [3] {\mutinf{#1}{#2 \smallspace \middle\vert \smallspace #3}}
\newcommand {\prob} [1] {\Fn{\Pr}{#1}}
\newcommand {\pr} [2] {\Fn{\Pr_{#1}}{#2}}
\newcommand {\abs} [1] {\ensuremath{ \left| #1 \right| }}
\newcommand {\norm} [1] {\ensuremath{ \left\| #1 \right\| }}
\newcommand {\normsub} [2] {\ensuremath{ \norm{#1}_{#2} }}
\newcommand {\onenorm} [1] {\normsub{#1}{1}}
\newcommand {\relent} [2] {\fn{\mathrm{S}}{#1 \middle\| #2}}
\newcommand {\rminent} [2] {\fn{\mathrm{S}_{\infty}}{#1 \middle\| #2}}
\DeclareMathOperator*{\bigE}{\mathbb{E}}
\newcommand {\expec} [2] {\Fn{\bigE_{\substack{#1}}}{#2}}
\newcommand {\email} [1] {E-mail: \texttt{#1}.}
\title {\textbf{\PdfTitle}}
\author{
    \Rahul\thanks{\CQTCS \email{rahul@comp.nus.edu.sg}} \\
National U. Singapore
    \and
    \Attila\thanks{\CQT \email{attila.pereszlenyi@gmail.com}} \\
National U. Singapore
    \and
    \Penghui\thanks{\CQT \email{pyao@nus.edu.sg}} \\
National U. Singapore }
\date {\today}
\begin{document}
\maketitle

\abstract{A strong direct product theorem for a problem in a given
model of computation states that, in order  to compute $k$ instances
of the problem, if we provide resource which is less than $k$ times
the resource required for computing one instance of the problem with
constant success probability, then the probability of correctly
computing all the $k$ instances together, is exponentially small in
$k$. In this paper, we consider the model of two-party bounded-round
public-coin randomized communication complexity. For a relation $f
\subseteq \X \times \Y \times \Z$ ($\X,\Y,\Z$ are finite sets), let
$\mathrm{R}^{(t), \mathrm{pub}}_\ve (f)$ denote the two-party
$t$-message public-coin communication complexity of $f$ with worst
case error $\ve$. We show that for any relation $f$ and integer $k \geq 1$
$$  \mathrm{R}^{(t), \mathrm{pub}}_{1 -
2^{-\Omega(k/t^2)}}(f^k) = \Omega\(\frac{k}{t} \cdot
\(\mathrm{R}^{(t), \mathrm{pub}}_{1/3}(f) - \O(t^2)\) \) .$$ 
In particular, it implies
a strong direct product theorem for the two-party constant-message
public-coin randomized communication complexity of all relations $f$.
% for which $\mathrm{R}^{(t), \mathrm{pub}}_{1/3}(f)  = \Omega(t^2)$.

Our result for example implies a strong direct product theorem for
the  pointer chasing problem. This problem has been well studied for
understanding round v/s communication trade-offs in both classical
and quantum communication
protocols~\cite{Nisan:1991:RCC:103418.103463,Klauck:2000:QPC:335305.335396,Ponzio:2001:CCP:374952.374989,Klauck:2001:IQC:380752.380786,Jain:2002:QCC:646840.708805}.

We show our result using information theoretic arguments.  Our
arguments and techniques build on the ones used in
Jain~\cite{Jain:2011}, where a strong direct product theorem for the
two-party one-way public-coin communication complexity of all
relations is shown (that is the special case of our result when
$t=1$). One key tool used in our work and also in
Jain~\cite{Jain:2011} is a  message compression technique due to
Braverman and Rao~\cite{Braverman2011}, who used it to show a direct
sum theorem for the two-party bounded-round public-coin randomized
communication complexity of all relations. Another important  tool
that we use is a correlated sampling protocol,  which for example,
has been used in Holenstein~\cite{Holenstein2007} for proving a
parallel repetition theorem  for two-prover games.

}

\section{Introduction}
A fundamental question in complexity theory is how much resource is
needed to solve $k$ independent instances of a problem compared to
the resource required to solve one instance. More specifically,
suppose for solving one instance of a problem with probability of
correctness $p$, we require $c$ units of some resource in a given
model of computation. A natural way to solve $k$ independent
instances of the same problem is to solve them independently, which
needs $k \cdot c$ units of resource and the overall success
probability is $p^k$. A {\em strong direct product} theorem for this
problem would state that any algorithm, which solves $k$ independent
instances of this problem with $o(k\cdot c)$ units of the resource,
can only compute all the $k$ instances correctly with probability at
most $p^{-\Omega(k)}$.

In this work, we are concerned with the model of communication
complexity which was introduced by Yao \cite{Yao:1979}.  In this
model there are different parties who wish to compute a joint
relation of their inputs. They do local computation, use
public/private coins, and communicate between them to achieve
this task. The resource that is counted is the number of bits
communicated. The text by Kushilevitz and
Nisan~\cite{Kushilevitz96} is an excellent reference for this model.
Direct product questions and the weaker {\em direct sum} questions
have been extensively investigated in different sub-models of
communication complexity.  A direct sum theorem states that in order
to compute $k$ independent instances of a problem, if we provide
resources less than $k$ times the resource required to compute one
instance of the problem with the constant success probability $p <
1$, then the success probability for computing all the $k$ instances
correctly is at most a constant $q < 1$. Some examples of known
direct product theorems are: Parnafes, Raz and
Wigderson's~\cite{Parnafes:1997:DPR:258533.258620} theorem for {\em
forests} of communication protocols;
Shaltiel's~\cite{Shaltiel:2004:TPS:1018404.1018405} theorem for the
{\em discrepancy bound} (which is  a lower bound on the {\em
distributional} communication complexity) under the uniform
distribution; extended to arbitrary distributions
by Lee, Shraibman and \v{S}palek~\cite{Lee2008}; extended  to the multiparty case by Viola and Wigderson~\cite{ViolaW08}; extended to the generalized discrepancy bound by Sherstov~\cite{Sherstov:2011:SDP:1993636.1993643}; 
Jain, Klauck and Nayak's~\cite{Jain2008} theorem for subdistribution bound; Klauck, \v{S}palek, de Wolf's~\cite{KlauckSdW04} theorem for the quantum communication complexity of the {\em set disjointness} problem; Klauck's \cite{Klauck2010} theorem for the public-coin communication complexity of the set-disjointness problem
(which was re-proven using very different arguments in
Jain~\cite{Jain:2011}); Ben-Aroya, Regev, and de Wolf's~\cite{Ben-AroyaRdW08} theorem for the one-way quantum communication complexity of the {\em index} function problem;  Jain's \cite{Jain:2011} theorem for
randomized one-way communication complexity and
Jain's~\cite{Jain:2011} theorem for {\em conditional relative
min-entropy bound} (which is a lower bound on the public-coin
communication complexity). Direct sum theorems have been shown in
the public-coin one-way model~\cite{Jain:2003:DST:1759210.1759242},
public-coin simultaneous message passing
model~\cite{Jain:2003:DST:1759210.1759242}, entanglement-assisted
quantum one-way communication
model~\cite{Jain:2005:PEM:1068502.1068658}, private-coin
simultaneous message passing
model~\cite{Jain:2009:NRS:1602931.1603157} and constant-round
public-coin two-way model~\cite{Braverman2011}. On the other hand,
strong direct product conjectures have been shown to be false by
Shaltiel~\cite{Shaltiel:2004:TPS:1018404.1018405} in some models of
distributional communication complexity (and of {\em query
complexity} and {\em circuit depth complexity}) under specific
choices for the error parameter.

Examples of direct product theorems in others models of computation
include Yao's {\em XOR lemma}~\cite{Yao82}, Raz's
\cite{Raz:1995:PRT:225058.225181} theorem for two-prover games;
Shaltiel's~\cite{Shaltiel:2004:TPS:1018404.1018405} theorem for {\em
fair decision trees}; Nisan, Rudich and
Saks'~\cite{Nisan:1999:PHB:305673.305747} theorem for {\em decision
forests}; Drucker's~\cite{Drucker:2011:IDP:2013878.2014045} theorem
for randomized query complexity;
Sherstov's~\cite{Sherstov:2011:SDP:1993636.1993643} theorem for {\em
approximated polynomial degree} and Lee and Roland's~\cite{Lee:2011}
theorem for quantum query complexity.  Besides their inherent
importance, direct product theorems have had various important applications
such as in {\em Probabilistically checkable
proofs}~\cite{Raz:1995:PRT:225058.225181}; in circuit
complexity~\cite{Yao82} and in showing time-space tradeoffs~\cite{KlauckSplaekdewolf:2004:QCS:1032645.1033156,Ambainisspalekdewolf09,Klauck2010}.

In this paper, we show a direct product theorem for the two-party
bounded-round public-coin randomized communication complexity. In
this model, for computing a relation $f\subseteq \X\times\Y\times\Z$
($\X,\Y,\Z$ are finite sets), one party say Alice, is given an input
$x\in\X$ and the other party say Bob, is given an input $y\in Y$.
They are supposed to do local computations using public-coins shared
between them, communicate a fixed number of messages between
them and at the end, output an element $z\in Z$. They are said
to succeed if $(x,y,z)\in f$.  For a natural number $t \geq 1$ and
$\ve \in (0,1)$, let $\mathrm{R}^{(t), \mathrm{pub}}_\ve (f)$ denote
the two-party $t$-message public-coin communication complexity of
$f$ with worst case error $\ve$, that is the communication of the
best public-coin protocol between Alice and Bob with $t$ messages
exchanged between them, and the error (over the public coins) on any
input $(x,y)$ being at most $\ve$. We show the following.
\begin{theorem}\label{thm:direct product}
    Let  $\X$, $\Y$, $\Z$ be finite sets,
    $f\subseteq\X\times\Y\times\Z$ a relation, $\ve > 0$ and $k, t \geq1$ be integers. There exists a constant $\kappa$ such that,
    \[ \mathrm{R}^{(t),\mathrm{pub}}_{1-(1-\ve/2)^{\Omega(k \ve^2/t^2)}}(f^k) = \Omega \( \frac{\ve \cdot k}{t} \cdot \(\mathrm{R}^{(t),\mathrm{pub}}_{\ve}(f) - \frac{\kappa t^2}{\ve^2} \) \).\]
\end{theorem}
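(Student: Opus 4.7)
The plan is to argue by contradiction, adapting the information-theoretic compression framework of Jain (who handles the $t=1$ case) and of Braverman--Rao (whose direct sum works for general bounded rounds). Suppose $\Pi$ is a $t$-message public-coin protocol for $f^k$ of communication $c$ and worst-case success probability at least $\sigma := (1-\ve/2)^{\Omega(k\ve^2/t^2)}$. I would construct from $\Pi$ a $t$-message public-coin protocol $\Pi'$ for a single instance of $f$ with worst-case error $\ve$ and communication $O(ct/(\ve k)) + O(t^2/\ve^2)$, yielding the stated lower bound on $c$.

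By Yao's minimax principle, it suffices to build, for each hard input distribution $\mu$ on $\X\times\Y$, a protocol $\Pi'$ of $\mu$-average error at most $\ve$. Alice (holding $x$) and Bob (holding $y$) use the public coin to pick uniformly a target coordinate $j\in[k]$ and a witness set $T\subseteq[k]\setminus\{j\}$ with $|T|=\Theta(k\ve^2/t^2)$, together with sample inputs at all positions of $[k]\setminus\{j\}$ drawn from $\mu$. They embed $(x,y)$ at coordinate $j$ and then simulate $\Pi$ on this $k$-tuple, but under the conditional distribution obtained by conditioning on the event $S_T$ that $\Pi$ succeeds on every coordinate in $T$. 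A Raz-style chain-rule averaging (using $\sigma\ge(1-\ve/2)^{|T|}$) shows that, for a random ordering of the coordinates and our choice of $|T|$, the conditional success probability of $\Pi$ on the target coordinate is at least $1-\ve/2$.

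The main work is to simulate this conditional protocol inside $t$ rounds using few bits. Two coupled difficulties arise. First, conditioning on $S_T$ destroys the product structure of $\mu^k$, so Alice's view of $Y_T$ and Bob's view of $X_T$ become correlated with the transcript and cannot be sampled locally; I handle this using Holenstein-style correlated sampling, which, with public randomness and $O(t^2/\ve^2)$ bits of communication interleaved with the regular messages, lets Alice and Bob jointly sample from the $S_T$-conditioned witness distribution with statistical error $O(\ve/t)$. Second, the messages of $\Pi$ depend on the entire $k$-tuple, not just on coordinate $j$; a chain-rule decomposition of the information cost of $\Pi$ over the $k$ coordinates shows that, for a typical $(j,T)$, each of the $t$ messages leaks only $O(c/k)$ bits about the target coordinate's input under the conditional distribution, with the other party's view as side information. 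Round-by-round application of the Braverman--Rao compression protocol then transmits the $r$-th message using $O(c/k)+O(\log(t/\ve))$ bits with statistical error $O(\ve/t)$, while preserving the round structure.

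Summing over the $t$ rounds and the sampling step, $\Pi'$ uses $O(ct/k)+O(t\log(t/\ve))+O(t^2/\ve^2)$ bits, and a union bound gives an extra $\ve/2$ of error on top of the $\ve/2$ already allowed, so its overall error is at most $\ve$. Comparing with $\mathrm{R}^{(t),\mathrm{pub}}_\ve(f)$ yields $c=\Omega(\frac{\ve k}{t}(\mathrm{R}^{(t),\mathrm{pub}}_\ve(f)-\kappa t^2/\ve^2))$. The main obstacle I anticipate is the delicate interaction between conditioning on $S_T$ (which couples all coordinates and all rounds) and the per-message information decomposition: to obtain a useful bound on the target coordinate's information cost one must argue that the witness-induced correlations are absorbed almost entirely into variables that the receiver already sees, rather than into the transmitted message itself. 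A secondary subtlety is folding the correlated-sampling traffic into the existing $t$ rounds, since a naive implementation would add further rounds and break the round budget.
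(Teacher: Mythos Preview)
Your high-level outline matches the paper's approach closely: Yao reduction, conditioning on success in a subset of coordinates, correlated sampling to restore shared access to the conditioned ``environment,'' and round-by-round Braverman--Rao compression. Two points deserve comment.

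First, a correction: Holenstein's correlated sampling uses only public coins and \emph{no} communication, so there is nothing to ``fold into the existing $t$ rounds.'' The $O(t^2/\ve^2)$ term in the final bound comes not from sampling traffic but from the additive overhead $O(t/\ve')$ in each compressed message (with $\ve'=\Theta(\ve/t)$), summed over $t$ rounds. Relatedly, your per-round compression cost should be $O(c/(\ve' k))$ rather than $O(c/k)$; this is where the $\ve$ in the $\ve k/t$ prefactor originates.

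Second, and more substantively, you correctly flag as ``the main obstacle'' the need to show that, after conditioning on success, Alice's $s$-th message still carries negligible information about $Y_j$ given $X_j$ and the prior transcript (and symmetrically for Bob). This is indeed the crux, and your proposal does not say how to do it. The paper's mechanism is a specific auxiliary construction: sample $D\in\{0,1\}^k$ uniformly, set $U_i=X_i$ if $D_i=0$ and $U_i=Y_i$ otherwise, and let $R_j=D_{-j}U_{-j}X_{C\cup[j-1]}Y_{C\cup[j-1]}$. This $R_j$ is what both parties correlated-sample; the point is that (i) before conditioning, $Y\leftrightarrow X_jR_jM_{<s}\leftrightarrow M_s$ is an exact Markov chain for odd $s$, and (ii) conditioning on an event of probability $\ge 2^{-\delta k}$ perturbs this by at most $\delta k$ in relative entropy, which after averaging over $j\notin C$ gives $\mathrm{I}(Y^1_j;M^1_s\mid X^1_jR^1_jM^1_{<s})=O(\delta)$ for typical $j$. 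Without the $D,U$ device you cannot simultaneously (a) give both parties a common $R_j$ they can sample via Holenstein and (b) retain the Markov structure needed for this bound. Simply ``sampling inputs at all positions of $[k]\setminus\{j\}$ from $\mu$'' does not achieve both, because neither party alone can hold all of $X_{-j}Y_{-j}$, and giving each party only their own side leaves no shared object on which to condition. This is the missing idea in your plan; once it is in place, the rest of your outline goes through essentially as in the paper (the paper builds the conditioning set greedily rather than choosing it at random, but either style works).
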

In particular, it implies a strong direct product theorem for the
two-party constant-message public-coin randomized communication
complexity of all relations $f$\footnote{When $\mathrm {R}^{(t),\mathrm{pub}}_{\ve}(f)$ is a constant, then a direct product result can be shown via direct arguments as for example in~\cite{Jain:2011,Sherstov:2011:SDP:1993636.1993643}.}. 
% for which $\mathrm{R}^{(t), \mathrm{pub}}_{1/3}(f)  = \Omega(t^2)$.
Our result generalizes the result of Jain \cite{Jain:2011} which can be regarded as the special case when $t=1$.

As a direct consequence of our result we get a direct product
theorem for the {\em pointer chasing} problem defined as follows.
Let $n, t\geq1$ be integers. Alice and Bob are given functions $F_A:
[n]\rightarrow [n]$ and $F_B: [n]\rightarrow [n]$, respectively.
Let $F^t$ represent alternate composition of $F_A$ and $F_B$ done
$t$ times, starting with $F_A$. The parties are supposed to
communicate and determine $F^t(1)$. In the bit version of the
problem, the players are supposed to output the least significant
bit of $F^t(s)$. We refer to the $t$-pointer chasing problem as
$\mathrm{FP}_t$ and the bit version as $\mathrm{BP}_t$. The pointer
chasing problem naturally captures the trade-off between number of
messages exchanged and the communication used. There is a
straightforward  $t$-message deterministic protocol with $t\cdot\log
n$ bits of communication for both $\mathrm{FP}_t$ and
$\mathrm{BP}_t$. However if only $t-1$ messages are allowed to be
exchanged between the parties, exponentially more communication is
required.  The communication complexity of this problem has been
very well studied both in the classical and quantum models of
communication complexity
~\cite{Nisan:1991:RCC:103418.103463,Klauck:2000:QPC:335305.335396,Ponzio:2001:CCP:374952.374989,Klauck:2001:IQC:380752.380786,Jain:2002:QCC:646840.708805}.
The best lower bounds we know so far are as follows (below $\mathrm{Q}^{(t)}(\cdot)$  stands for the $t$-message quantum communication complexity).
\begin{theorem}\label{thm:pointerchasing}
For  integer $t \geq 1$,
\begin{enumerate}
\item~\cite{Ponzio:2001:CCP:374952.374989} $\mathrm{R}^{(t-1),\mathrm{pub}}_{1/3}(\mathrm{FP}_t)\geq\Omega(n\log^{(t-1)}n).$

\item~\cite{Ponzio:2001:CCP:374952.374989} $\mathrm{R}^{(t-1),\mathrm{pub}}_{1/3}(\mathrm{BP}_t)\geq\Omega(n).$

\item~\cite{Jain:2002:QCC:646840.708805} $\mathrm{Q}^{(t-1)}_{1/3}(\mathrm{FP}_t)\geq\Omega(n\log^{(t-1)}n).$

\end{enumerate}
\end{theorem}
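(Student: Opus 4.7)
The plan is to run the standard round-elimination argument against the product distribution $\mu_n$ under which $F_A,F_B:[n]\to[n]$ are independent uniform random functions; by Yao's minimax principle this distributional lower bound implies the claimed public-coin bounds. The core step is a round-elimination lemma: given a $(t-1)$-message protocol $\Pi$ for $\mathrm{FP}_t$ on $\mu_n$ with Alice speaking first, first-message length $\ell$, and distributional error at most $\delta$, one can construct a $(t-2)$-message protocol for $\mathrm{FP}_{t-1}$ on $\mu_{n'}$ with $n'=\Omega(n/\log n)$ and error at most $\delta+o(1)$. The intuition is that the first message $M_1$ depends only on $F_A$, so by an information-theoretic averaging over coordinates, for most $j\in[n]$ the conditional distribution of $F_A(j)$ given $M_1$ remains close to uniform on a subset of $[n]$ of size $\Omega(n/\log n)$. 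Fixing a typical value $m_1^{\ast}$ of $M_1$ and such a coordinate $j$ turns $j$ into the new starting pointer and lets Bob open a protocol for a smaller $\mathrm{FP}_{t-1}$ instance, effectively skipping the eliminated round.

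Iterating the lemma $t-1$ times reduces a $(t-1)$-message protocol for $\mathrm{FP}_t$ on $[n]$ to a zero-message protocol for $\mathrm{FP}_1$ on a domain of size $\Omega(n/\log^{(t-1)}n)$. Since a zero-message protocol for $\mathrm{FP}_1$ under the uniform distribution must err with probability $1-o(1)$, the accumulated error budget forces the total communication to be $\Omega(n\log^{(t-1)}n)$, yielding part~(1). Part~(2) follows from the same chain of reductions but with a weaker base case, as $\mathrm{BP}_t$ outputs only a single bit: the iterated-logarithm factor disappears and we are left with the $\Omega(n)$ bound. For part~(3), the same induction is carried out in the quantum model, replacing the classical ``condition on $m_1^{\ast}$'' step by a disturbance/information trade-off (the average-encoding theorem) that bounds how much a short quantum first message can change Alice's state on average.

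The main obstacle is the round-elimination lemma itself: making the error accounting and the domain-shrinkage estimate sharp enough to survive $t-1$ iterations. Classically this reduces to a tight bound on $\mathrm{I}(M_1;F_A(j))$ for a random coordinate $j$, together with a careful choice of a typical value of $M_1$ that simultaneously preserves the approximate uniformity of $F_A(j)$ and the success probability of the protocol; in the quantum case one cannot literally condition on a quantum register, so the analogous step must be carried out via the average-encoding theorem, which is the extra ingredient needed for part~(3).
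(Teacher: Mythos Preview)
The paper does not give its own proof of this theorem: it is quoted as a known result, with parts~(1)--(2) attributed to Ponzio--Saks--Sipser and part~(3) to Jain--Radhakrishnan--Sen, and it is used only as input to Corollary~\ref{cor:pointerchasingsdpt}. So there is nothing in the paper to compare your argument against beyond the citations.

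Your high-level plan --- round elimination under a uniform product distribution, with the quantum case handled via an average-encoding/information--disturbance argument in place of classical conditioning --- is indeed the shape of the proofs in the cited references. The specific mechanism you describe, however, does not match those proofs and, as written, does not yield the stated bounds. In the Ponzio--Saks--Sipser analysis the domain does \emph{not} shrink to $\Omega(n/\log n)$ at each step; the reduction keeps the universe $[n]$ fixed and instead tracks how the conditional entropy of the current pointer degrades as a function of the communication spent in the eliminated round. That entropy bookkeeping is exactly what produces the iterated-log factor $\log^{(t-1)} n$ in part~(1) and its disappearance in the bit version, part~(2). Your domain-shrinkage recursion would leave a universe of size roughly $n/(\log n)^{t-1}$ after $t-1$ steps, which neither explains the $\log^{(t-1)} n$ term nor by itself lower-bounds the total communication; the sentence ``the accumulated error budget forces the total communication to be $\Omega(n\log^{(t-1)} n)$'' is where the real work would have to happen, and it is not supplied. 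Finally, your base case ``a zero-message protocol for $\mathrm{FP}_1$ must err with probability $1-o(1)$'' is true only when the \emph{wrong} player is required to produce the output; this has to be tracked explicitly through the role swap at each elimination step.
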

As a consequence of Theorem~\ref{thm:direct product} we get  strong
direct product results for this problem. Note that in the descriptions
of $\mathrm{FP}_t$ and $\mathrm{BP}_t$, $t$ is a fixed constant, not
dependent on the input size.
\begin{cor}\label{cor:pointerchasingsdpt}
For  integers $t,k\geq 1$,
\begin{enumerate}
\item
$\mathrm{R}^{(t-1),\mathrm{pub}}_{1-2^{-\Omega(k/t^2)}}(\mathrm{FP}_t^k)\geq\Omega\br{\frac{k}{t}\cdot
n\log^{(t-1)}n}$.
\item $\mathrm{R}^{(t-1),\mathrm{pub}}_{1-2^{-\Omega(k/t^2)}}(\mathrm{BP}_t^k)\geq\Omega\br{\frac{k}{t}\cdot n}$.
\end{enumerate}
\end{cor}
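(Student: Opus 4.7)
The plan is to derive Corollary~\ref{cor:pointerchasingsdpt} directly by composing Theorem~\ref{thm:direct product} with the single-copy lower bounds of Theorem~\ref{thm:pointerchasing}. First, I would apply Theorem~\ref{thm:direct product} with error parameter $\ve = 1/3$ and with message count $t-1$ in place of the $t$ in that theorem, obtaining, for any relation $f$,
\[
\mathrm{R}^{(t-1),\mathrm{pub}}_{1-(5/6)^{\Omega(k/(t-1)^2)}}(f^k) \;=\; \Omega\!\left( \frac{k}{t}\left( \mathrm{R}^{(t-1),\mathrm{pub}}_{1/3}(f) - O(t^2) \right) \right).
\]
Since $5/6 < 1$ is a fixed constant, $(5/6)^{\Omega(k/(t-1)^2)} = 2^{-\Omega(k/t^2)}$, so the success-probability exponent appearing in the corollary is matched; here I am also using that $(t-1)^2$ and $t^2$ differ by only an absolute constant for $t \geq 2$ (the case $t=1$ being either trivial or vacuous, depending on who outputs).

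Next, I would substitute the single-copy lower bounds from Theorem~\ref{thm:pointerchasing} into this inequality. Part~(1) of Theorem~\ref{thm:pointerchasing} gives $\mathrm{R}^{(t-1),\mathrm{pub}}_{1/3}(\mathrm{FP}_t) = \Omega(n\log^{(t-1)} n)$, which under the composition above yields exactly statement~(1) of the corollary; part~(2) gives $\mathrm{R}^{(t-1),\mathrm{pub}}_{1/3}(\mathrm{BP}_t) = \Omega(n)$, which yields statement~(2).

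The only point requiring a little care is the additive correction $\kappa t^2 / \ve^2 = O(t^2)$ that Theorem~\ref{thm:direct product} subtracts from the single-copy bound before multiplying by $k/t$. Since $t$ is fixed independently of the input size $n$ (as explicitly noted in the excerpt just before the corollary), this correction is a constant and is therefore absorbed into the $\Omega(\cdot)$: it is dominated by $\Omega(n)$, and \emph{a fortiori} by $\Omega(n\log^{(t-1)}n)$, as $n \to \infty$. I do not anticipate any further obstacle; the corollary is essentially a direct plug-in of the two cited theorems, and no new information-theoretic argument is needed.
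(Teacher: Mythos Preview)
Your proposal is correct and matches the paper's approach exactly: the paper does not give a separate proof of this corollary, simply stating it as a direct consequence of Theorem~\ref{thm:direct product} together with the single-copy bounds of Theorem~\ref{thm:pointerchasing}, with the additive $O(t^2)$ term absorbed because $t$ is a fixed constant independent of $n$.
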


\subsection*{Our techniques}
We prove our direct product result using information theoretic
arguments. Information theory is a versatile tool in communication
complexity, especially in proving lower bounds and direct sum and
direct product
theorems~\cite{Chakrabarti:2001:ICD:874063.875561,Bar-Yossef2002a,Jain:2003:DST:1759210.1759242,Jain:2003:LBB:946243.946331,Jain:2005:PEM:1068502.1068658,Jain:2009:NRS:1602931.1603157,Barak:2010:CIC:1806689.1806701,Braverman2011,Jain:2011}.
The broad argument that we use is as follows. For a given
relation $f$, let  the communication required for computing one
instance with $t$ messages and constant success  be $c$. Let us
consider a  protocol for computing $f^k$ with $t$ messages and
communication cost $o(kc)$.  Let us condition on success on some $l$
coordinates.  If the overall success in these $l$ coordinates is
already as small as we want then we are done and stop. Otherwise we exhibit another coordinate $j$ outside of these $l$
coordinates such that the success in the $j$-th coordinate, even
conditioned on the success in the $l$ coordinates, is bounded away
from $1$. This way the overall success keeps going down and becomes
exponentially small eventually. We do this argument in the
distributional setting where one is concerned with average error
over the inputs coming from a specified distribution rather than the
worst case error over all inputs. The distributional setting can
then be related to the worst case setting by the well known Yao's
principle~\cite{Yao:1979}.

More concretely, let $\mu$ be a distribution on $\X \times \Y$,
possibly non-product across $\X$ and $\Y$. Let $c$ be the minimum
communication required for computing $f$ with $t$-message protocols
having error at most $\ve$ averaged over $\mu$. Let us consider the
inputs for $f^k$ drawn from the distribution $\mu^{ k}$ ($k$
independent copies of $\mu$). Consider a $t$-message protocol
$\mathcal{P}$ for $f^k$ with communication $o(kc)$ and for the rest
of the argument condition on success on a set $C$ of coordinates. If
the success probability of this event is as small as we desire then
we are done. Otherwise we exhibit a new coordinate $j\notin C$
satisfying the following conditions: first the distribution of
inputs $X_j Y_j$ (of Alice and Bob respectively) in the $j$-th
coordinate is quite close to $\mu$; second the joint distribution
$X_jY_jM$ (where $M$ is the message transcript of $\mathcal{P}$) can be
approximated very well by Alice and Bob using a $t$ message protocol
for $f$, when they are given input according to $\mu$, using
communication less than $c$. This shows that success in the $j$-th
coordinate must be bounded away from one. Since we can simulate each
message only approximately, in order to keep the overall error
bounded, we are able to make our argument for protocols with a
bounded number of message exchanges.

One difficulty that is faced in this argument is that
since $\mu$ may be a non-product distribution, Alice and Bob may
obtain information about each other's input in the $j$-th coordinate
via their inputs in other coordinates.  This  is overcome by
splitting the distribution $\mu$ into a convex combination of
several product distributions. This idea of splitting a non-product
distribution into convex combination of product distributions has
been used in several previous works to handle non-product
distributions in different
settings~\cite{Razborov92,Raz:1995:PRT:225058.225181,Bar-Yossef2002a,Holenstein2007,Barak:2010:CIC:1806689.1806701,Braverman2011,Jain:2011}.
Some important tools that we use in our arguments are a message
compression protocol due to Braverman and Rao~\cite{Braverman2011}
and the {\em correlated sampling} protocol that appeared for example
in Holenstein~\cite{Holenstein2007}.

\subsubsection*{Organization} The rest of the paper is organized as follows. In Section~\ref{sec:Preliminaries}, we present some background on information theory and communication complexity.  In Section~\ref{sec:main theorem}, we prove our main result Theorem~\ref{thm:direct product}, starting with some lemmas that are helpful in building the proof.
%In Section~\ref{sec:conclusion}, we make some concluding remarks and list some open problems.

\section{Preliminaries}
\label{sec:Preliminaries}

\subsection*{Information theory}

For integer $n \geq 1$, let $[n]$ represent the set $\{1,2, \ldots,
n\}$. Let $\X$, $\Y$ be finite sets and $k$ be a natural number. Let
$\X^k$ be the set $\X\times\cdots\times\X$, the cross product of
$\X$ $k$ times. Let $\mu$ be a (probability) distribution on $\X$.
Let $\mu(x)$ represent the probability of $x\in\X$ according to
$\mu$. Let $X$ be a random variable distributed according to $\mu$,
which we denote by $X\sim\mu$. We use the same symbol to represent
a random variable and its distribution whenever it is clear from
the context.  The expectation value of some function $f$ on $\X$ is
denoted as
\[ \expec{x \leftarrow X}{f(x)} \defeq\sum_{x\in\X} \prob{X=x} \cdot f(x). \]
The entropy of $X$ is defined to be
$\mathrm{H}(X)\defeq-\sum_x\mu(x) \cdot \log\mu(x)$. For two distributions
$\mu$, $\lambda$ on $\X$, the distribution $\mu \otimes \lambda$ is
defined as $(\mu\otimes\lambda)(x_1,x_2)\defeq\mu(x_1)\cdot\lambda(x_2)$.
Let $\mu^k\defeq\mu\otimes\cdots\otimes\mu$, $k$ times. The $\ell_1$
distance between $\mu$ and $\lambda$ is defined to be half of the
$\ell_1$ norm of $\mu - \lambda$; that is
$$\|\lambda-\mu\|_1\defeq\frac{1}{2}\sum_x|\lambda(x)-\mu(x)|=\max_{S\subseteq\X}|\lambda_S-\mu_S| , $$
where $\lambda_S \defeq\sum_{x\in S}\lambda(x)$. We say that
$\lambda$ is $\ve$-close to $\mu$ if $\|\lambda-\mu\|_1\leq\ve$. The
relative entropy between distributions $X$ and $Y$ on $\X$ is
defined as
\[  \relent{X}{Y} \defeq \sum_{x\in\X}
    \prob{X=x} \cdot \log \frac{\prob{X=x}}{\prob{Y=x}} .\]
The relative min-entropy between them is defined as
\[ \rminent{X}{Y} \defeq \max_{x\in\X}
    \set{ \log \frac{\prob{X=x}}{\prob{Y=x}} }.\]
It is easy to see that $\relent{X}{Y} \leq \rminent{X}{Y}$. Let
$X,Y,Z$ be jointly distributed random variables. Let $Y_x$ be the
distribution of $Y$ conditioned on $X=x$. The conditional entropy of
$Y$ conditioned on $X$ is defined as $\mathrm{H}(Y|X) \defeq
\expec{x\leftarrow X}{\mathrm{H}(Y_x)} =
\mathrm{H}(XY)-\mathrm{H}(X)$. The mutual information between $X$
and $Y$ is defined as
\[  \mutinf{X}{Y} \defeq \mathrm{H}(X)+\mathrm{H}(Y)-\mathrm{H}(XY)
    = \expec{y \leftarrow Y}{\relent{X_y}{X}}
    = \expec{x \leftarrow X}{\relent{Y_x}{Y}}. \]
It is easily seen that $\mutinf{X}{Y} = \relent{XY}{X \otimes Y}$.
We say that $X$ and $Y$ are independent iff $\mutinf{X}{Y} = 0$. The
conditional mutual information between $X$ and $Y$, conditioned on
$Z$, is defined as
\[  \condmutinf{X}{Y}{Z} \defeq
    \expec{z \leftarrow Z}{\condmutinf{X}{Y}{Z=z}}
    = \mathrm{H}\br{X|Z}+\mathrm{H}\br{Y|Z}-\mathrm{H}\br{XY|Z} .\]
The following {\em chain rule} for mutual information is easily
seen,
$$\mutinf{X}{YZ} = \mutinf{X}{Z} + \condmutinf{X}{Y}{Z} .$$  Let $X,X', Y, Z $ be jointly distributed random variables. We define the joint
distribution of $(X'Z)(Y|X)$ by
$$\Pr[(X'Z)(Y|X)=x,z,y]
\defeq \Pr[X'=x, Z=z] \cdot \Pr[Y=y|X=x].$$
We say that $X$, $Y$, $Z$ is a Markov chain iff $XYZ=(XY)(Z|Y)$ and
we denote it by $X\leftrightarrow Y\leftrightarrow Z$. It is easy to
see that $X$, $Y$, $Z$ is a Markov chain if and only if
$\condmutinf{X}{Z}{Y}=0$. Ibinson, Linden and Winter~\cite{Ben2008}
showed that if $\condmutinf{X}{Y}{Z}$ is small then $XYZ$ is close
to being a Markov chain.
\begin{lemma}[\cite{Ben2008}]
    \label{lem:mutual inf and relative ent}
    For any random variables $X$, $Y$ and $Z$, it holds that
    \[ \condmutinf{X}{Z}{Y} = \min \set{ \relent{XYZ}{X'Y'Z'} :
    X' \leftrightarrow Y'\leftrightarrow Z'}. \]
    The minimum is achieved by distribution
    $X'Y'Z'=(XY)(Z|Y)$.
\end{lemma}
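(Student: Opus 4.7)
The plan is to verify the claimed minimizer achieves the value $\condmutinf{X}{Z}{Y}$, and then argue that every other Markov-chain joint distribution has relative entropy to $XYZ$ at least that large.

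First I would tackle the equality. Let $P$ denote the joint law of $XYZ$ and $Q$ the joint law of $X'Y'Z'\defeq(XY)(Z|Y)$, so by definition $Q(x,y,z)=P(x,y)\cdot P(z\mid y)$. Writing $P(x,y,z)=P(x,y)\cdot P(z\mid x,y)$ and plugging into the definition of relative entropy, the $P(x,y)$ factors cancel and one is left with
\[ \relent{XYZ}{X'Y'Z'} \;=\; \sum_{x,y,z} P(x,y,z)\,\log\frac{P(z\mid x,y)}{P(z\mid y)}, \]
which is exactly the standard expression for $\condmutinf{X}{Z}{Y}$. So $(XY)(Z|Y)$ attains the value claimed, and in particular it is a genuine Markov chain $X'\leftrightarrow Y'\leftrightarrow Z'$.

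Next I would prove the matching lower bound: for \emph{every} distribution $R$ on $\X\times\Y\times\Z$ satisfying the Markov condition $X'\leftrightarrow Y'\leftrightarrow Z'$, one has $\relent{P}{R}\geq \condmutinf{X}{Z}{Y}$. The key tool is the chain rule for relative entropy, conditioning first on $Y$:
\[ \relent{P_{XYZ}}{R} \;=\; \relent{P_Y}{R_Y} \;+\; \expec{y\leftarrow P_Y}{\relent{P_{XZ\mid Y=y}}{R_{XZ\mid Y'=y}}}. \]
The first term is nonnegative and can be dropped. For the second term, the Markov assumption says $R_{XZ\mid Y'=y}=R_{X\mid Y'=y}\otimes R_{Z\mid Y'=y}$, so the target is a product distribution. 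For any product target $A\otimes B$ one has the identity
\[ \relent{P_{XZ}}{A\otimes B} \;=\; \mutinf{X}{Z}_P \;+\; \relent{P_X}{A}\;+\;\relent{P_Z}{B}\;\geq\;\mutinf{X}{Z}_P, \]
which follows by adding and subtracting $\log(P_X(x)P_Z(z))$ inside the logarithm. Applying this conditionally for each $y$ gives $\relent{P_{XZ\mid Y=y}}{R_{XZ\mid Y'=y}}\geq \mutinf{X}{Z\mid Y=y}_P$, and averaging over $y\leftarrow P_Y$ yields the desired $\condmutinf{X}{Z}{Y}$.

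Combining the two halves, $(XY)(Z|Y)$ is indeed a minimizer and the minimum value is $\condmutinf{X}{Z}{Y}$. I do not anticipate a real obstacle here; the only place to be careful is in matching up the conditioning in the chain rule and verifying that the Markov condition is exactly what forces $R_{XZ\mid Y'=y}$ to be a product measure, so that the product-target bound can be applied slice by slice.
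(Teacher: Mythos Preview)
Your argument is correct and complete: the direct computation shows that $(XY)(Z|Y)$ achieves the value $\condmutinf{X}{Z}{Y}$, and the chain-rule plus product-target decomposition gives the matching lower bound for every Markov $R$. Note, however, that the paper does not actually prove this lemma; it is quoted as a result of Ibinson, Linden and Winter~\cite{Ben2008} and stated without proof, so there is no ``paper's own proof'' to compare against. Your write-up is a clean self-contained proof of the classical statement, and incidentally the lower-bound step you use (pulling out the mutual information from a relative entropy against a product) is exactly the content of Fact~\ref{fact:mutinf is min} in the paper's preliminaries.
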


We will need the following basic facts. A very good text for
reference on information theory is~\cite{CoverT91}.

\begin{fact}
\label{fact:relative entropy joint convexity} Relative entropy is
jointly convex in its arguments. That is, for distributions $\mu,
\mu^1, \lambda, \lambda^1 \in \X$,
$$ \relent{p \mu  + (1-p) \mu^1}{\lambda + (1-p) \lambda^1} \leq p \cdot \relent{\mu}{\lambda} + (1-p) \cdot \relent{\mu^1}{\lambda^1} .$$
\end{fact}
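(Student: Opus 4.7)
The plan is to reduce the claim to the log-sum inequality applied pointwise. Recall the log-sum inequality: for nonnegative reals $a_1,\dots,a_n$ and positive reals $b_1,\dots,b_n$,
$$ \sum_i a_i \log\frac{a_i}{b_i} \;\geq\; \Big(\sum_i a_i\Big)\log\frac{\sum_i a_i}{\sum_i b_i} , $$
which itself follows from Jensen's inequality applied to the convex function $t \mapsto t\log t$.

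Apply this with $n=2$, $a_1 = p\,\mu(x)$, $a_2=(1-p)\,\mu^1(x)$, $b_1=p\,\lambda(x)$, $b_2=(1-p)\,\lambda^1(x)$, for each $x\in\X$. The common factors $p$ and $1-p$ cancel inside the logarithms, so the left-hand side simplifies to
$$ p\,\mu(x)\log\frac{\mu(x)}{\lambda(x)} + (1-p)\,\mu^1(x)\log\frac{\mu^1(x)}{\lambda^1(x)} , $$
while the right-hand side is exactly the $x$-term of $\relent{p\mu + (1-p)\mu^1}{p\lambda + (1-p)\lambda^1}$. Summing the resulting pointwise inequality over $x\in\X$ gives the claim.

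As a sanity check, I would also describe the alternative information-theoretic route: introduce a binary auxiliary variable $Z$ with $\Pr[Z=0]=p$; let $X\mid Z{=}0 \sim \mu$, $X\mid Z{=}1\sim \mu^1$ and similarly $Y\mid Z{=}0\sim\lambda$, $Y\mid Z{=}1\sim\lambda^1$. Then by the chain rule for relative entropy,
$$ \relent{XZ}{YZ} \;=\; p\,\relent{\mu}{\lambda} + (1-p)\,\relent{\mu^1}{\lambda^1} , $$
and the marginal distributions of $X$ and $Y$ are the two mixtures appearing on the left of the fact. Applying the data processing inequality to the channel that forgets $Z$ yields $\relent{X}{Y}\leq\relent{XZ}{YZ}$, which is the inequality to be proved.

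There is no real obstacle here: the only nontrivial ingredient is the log-sum inequality (equivalently, convexity of $t\log t$), and the rest is a direct algebraic rearrangement. I would present the log-sum argument as the primary proof for self-containment, and merely mention the data-processing perspective as a conceptual remark.
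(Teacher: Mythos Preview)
Your proof via the log-sum inequality is correct and is in fact the standard textbook argument (see, e.g., Cover--Thomas). Note, however, that the paper does not supply its own proof of this statement: it is recorded as a background ``Fact'' from information theory and left unproved, with the reader referred to~\cite{CoverT91}. So there is nothing in the paper to compare against; your write-up would serve perfectly well as a self-contained justification, and the data-processing remark you add is a nice conceptual complement.
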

\begin{fact}
    \label{fact:relative entropy splitting}
Relative entropy satisfies the following chain rule. Let $XY$ and
$X^1Y^1$ be random variables on $\X\times\Y$.
    It holds that
    \[ \relent{X^1Y^1}{XY} = \relent{X^1}{X}
    + \expec{x\leftarrow X^1} {\relent{Y^1_x}{Y_x}}.\]
    In particular, using Fact~\ref{fact:relative entropy joint convexity}
    \[ \relent{X^1Y^1}{X\otimes Y}
    = \relent{X^1}{X} + \expec{x\leftarrow X^1}{\relent{Y^1_x}{Y}}
    \geq \relent{X^1}{X} + \relent{Y^1}{Y}.\]
\end{fact}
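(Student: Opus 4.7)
The plan is to prove the first identity by expanding the definition of relative entropy and factoring joint probabilities as marginal times conditional, then to derive the ``in particular'' clause by specializing the identity to $X\otimes Y$ and applying Fact~\ref{fact:relative entropy joint convexity}. First I would write $\relent{X^1Y^1}{XY} = \expec{(x,y)\leftarrow X^1Y^1}{\log\frac{\Pr[X^1Y^1=(x,y)]}{\Pr[XY=(x,y)]}}$ and substitute the factorizations $\Pr[X^1Y^1=(x,y)] = \Pr[X^1=x]\cdot\Pr[Y^1_x=y]$ and $\Pr[XY=(x,y)] = \Pr[X=x]\cdot\Pr[Y_x=y]$, which are just the definitions of the conditional distributions used in the statement. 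The log-ratio then splits into the two summands $\log\frac{\Pr[X^1=x]}{\Pr[X=x]}$ and $\log\frac{\Pr[Y^1_x=y]}{\Pr[Y_x=y]}$, and I would use linearity to split the expectation accordingly.

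For the first summand, the integrand depends only on $x$, so summing $\Pr[Y^1_x=y]$ over $y$ (which equals $1$ for every $x$ in the support of $X^1$) collapses it to $\relent{X^1}{X}$. For the second summand, grouping the expectation as ``sample $x$ from $X^1$, then sample $y$ from the conditional $Y^1_x$'' directly yields $\expec{x\leftarrow X^1}{\relent{Y^1_x}{Y_x}}$. Adding the two reconstructs the asserted chain rule.

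To obtain the ``in particular'' clause, I would specialize the chain rule to the product distribution $X\otimes Y$, whose marginal on the first coordinate is $X$ and whose conditional on $\{x\}$ is $Y$ for every $x$, since the coordinates are independent. Substituting $Y_x=Y$ gives $\relent{X^1Y^1}{X\otimes Y} = \relent{X^1}{X} + \expec{x\leftarrow X^1}{\relent{Y^1_x}{Y}}$. The final inequality is then an application of Fact~\ref{fact:relative entropy joint convexity}: by the law of total probability $Y^1 = \sum_x \Pr[X^1=x]\cdot Y^1_x$, so joint convexity, applied with the second argument fixed to $Y$, gives $\relent{Y^1}{Y} \leq \expec{x\leftarrow X^1}{\relent{Y^1_x}{Y}}$, which combined with the equality above yields the claimed lower bound.

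There is no real obstacle here; the only care needed is the usual convention $0\log 0=0$ and $p\log(p/0)=\infty$ for $p>0$, which together guarantee that whenever $\relent{X^1Y^1}{XY}$ is finite the support of $X^1Y^1$ is contained in that of $XY$, making every conditional distribution and every division above well-defined on the relevant support.
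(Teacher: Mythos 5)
Your proof is correct. The paper does not supply a proof of this fact (it is listed among basic information-theoretic facts and deferred to the reference text by Cover and Thomas), and your argument is the standard one: factor the joint probabilities into marginal times conditional, split the log, and collapse the inner sum for the marginal term; then specialize to the product distribution and invoke joint convexity of relative entropy (noting $Y^1 = \sum_x \Pr[X^1=x]\,Y^1_x$ while the second argument is the constant mixture $Y = \sum_x \Pr[X^1=x]\,Y$) to get the final inequality. Your remark about support containment under the $0\log 0$ and $p\log(p/0)$ conventions is the right thing to say to justify the pointwise manipulations.
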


\begin{fact} \label{fact:mutinf is min}
    Let $XY$ and $X^1Y^1$ be random variables on $\X\times\Y$.
    It holds that
    \[  \relent{X^1Y^1}{X\otimes Y}
    \geq \relent{X^1Y^1}{X^1\otimes Y^1}=\mutinf{X^1}{Y^1}. \]
\end{fact}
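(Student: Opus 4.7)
The plan is to prove the stated inequality and equality separately, both directly from the definition. The equality $\relent{X^1Y^1}{X^1\otimes Y^1}=\mutinf{X^1}{Y^1}$ is just the identity $\mutinf{X}{Y}=\relent{XY}{X\otimes Y}$ that was already recorded in the preliminaries, applied to the pair $(X^1,Y^1)$; there is nothing further to do for that half.

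For the inequality, I would compute the difference
\[ \relent{X^1Y^1}{X\otimes Y} - \relent{X^1Y^1}{X^1\otimes Y^1} \]
directly. Writing each term as an expectation with respect to $X^1Y^1$ of a log-ratio, the joint $\Pr[X^1{=}x,Y^1{=}y]$ in the numerators cancels, and the denominators differ only by replacing $\Pr[X^1{=}x]\Pr[Y^1{=}y]$ with $\Pr[X{=}x]\Pr[Y{=}y]$. Because the logarithm splits the product, the difference decomposes into a sum of two expectations, each depending only on one marginal; these collapse to $\relent{X^1}{X}$ and $\relent{Y^1}{Y}$ respectively. Both are non-negative by Gibbs's inequality, yielding the claim and in fact the stronger identity
\[ \relent{X^1Y^1}{X\otimes Y} \;=\; \mutinf{X^1}{Y^1} + \relent{X^1}{X} + \relent{Y^1}{Y}. \]

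As an alternative route, one could apply Fact~\ref{fact:relative entropy splitting} to the left-hand side, which already gives the bound $\relent{X^1Y^1}{X\otimes Y}\geq \relent{X^1}{X}+\relent{Y^1}{Y}$, and separately apply the same fact with $X^1\otimes Y^1$ in the second slot to recognize the right-hand side as $\mutinf{X^1}{Y^1}$; but the direct cancellation above is cleaner since it gives both the inequality and the equality in one shot. I do not expect any real obstacle: the statement is essentially a one-line manipulation of log-ratios together with non-negativity of relative entropy, which is exactly the content that distinguishes the product of marginals as the $\otimes$-distribution minimizing the relative entropy from $X^1Y^1$.
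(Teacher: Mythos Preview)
Your proof is correct. The paper does not actually prove this statement: it is listed among the ``basic facts'' in the preliminaries (with a general reference to Cover and Thomas) and no argument is given. Your direct computation showing
\[
\relent{X^1Y^1}{X\otimes Y} - \relent{X^1Y^1}{X^1\otimes Y^1} = \relent{X^1}{X} + \relent{Y^1}{Y} \geq 0
\]
is the standard one-line verification, and the equality $\relent{X^1Y^1}{X^1\otimes Y^1}=\mutinf{X^1}{Y^1}$ is indeed already recorded earlier in the preliminaries. There is nothing to add.
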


\begin{fact}
    \label{fact:one norm and rel ent}
    For distributions $\lambda$ and $\mu$,
    \[ 0 \leq \onenorm{\lambda-\mu} \leq \sqrt{\relent{\lambda}{\mu}}. \]
\end{fact}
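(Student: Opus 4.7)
The plan is to prove this as (a slight weakening of) Pinsker's inequality. The lower bound $0\le\onenorm{\lambda-\mu}$ is immediate from the definition, so everything comes down to the upper bound. My plan follows the classical reduction to binary distributions combined with a short convexity argument in the scalar case. Note that the paper's version is weaker than the standard Pinsker bound $\onenorm{\lambda-\mu}\le\sqrt{\relent{\lambda}{\mu}/2}$, so establishing Pinsker will suffice.

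First I would use the identity $\onenorm{\lambda-\mu}=\max_{S\subseteq\X}|\lambda_S-\mu_S|$ from the preliminaries and pick the achieving set $S=\{x\in\X:\lambda(x)\ge\mu(x)\}$. Setting $p=\lambda_S$ and $q=\mu_S$, we have $\onenorm{\lambda-\mu}=p-q$. Next I would reduce the relative entropy on $\X$ to a two-point relative entropy by applying the log-sum inequality separately to the sums over $S$ and $S^c$, obtaining
\[\relent{\lambda}{\mu}\;\ge\; p\log\frac{p}{q}+(1-p)\log\frac{1-p}{1-q}.\]
(This is just the data-processing inequality for the coarse-graining $\X\to\{S,S^c\}$, and can be proved in one line from the log-sum inequality.)

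The remaining task is the scalar Pinsker inequality: for $p,q\in[0,1]$,
\[p\log\frac{p}{q}+(1-p)\log\frac{1-p}{1-q}\;\ge\;2(p-q)^2.\]
I would fix $q\in(0,1)$ and let $h(p)$ be the difference (left minus right), viewed as a function of $p$. Differentiating with respect to $p$,
\[h'(p)=\log\frac{p(1-q)}{q(1-p)}-4(p-q),\qquad h''(p)=\frac{1}{p(1-p)}-4\;\ge\;0,\]
the last inequality because $p(1-p)\le 1/4$. So $h$ is convex on $[0,1]$, and since $h(q)=0$ and $h'(q)=0$, the minimum is attained at $p=q$ and equals $0$. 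Hence $h(p)\ge 0$ for all $p\in[0,1]$.

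Combining the three steps yields $\relent{\lambda}{\mu}\ge 2(p-q)^2=2\onenorm{\lambda-\mu}^2\ge \onenorm{\lambda-\mu}^2$, and taking square roots gives the claim. The only subtlety in the whole argument is picking the correct variable for the convexity step: differentiating in $p$ (rather than in $q$) is what makes the second derivative bound $1/(p(1-p))\ge 4$ work cleanly; differentiating in $q$ would give $p/q^2+(1-p)/(1-q)^2$, which need not exceed $4$. Everything else (the choice of $S$, the log-sum reduction, and the boundary conditions $h(q)=h'(q)=0$) is routine.
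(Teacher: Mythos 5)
The paper gives no proof of this fact: it appears in the Preliminaries as one of several ``basic facts'' cited to the Cover--Thomas textbook, so there is no paper argument to compare against. Your proof is a correct and standard derivation of Pinsker's inequality --- reduce to two points via the achieving set $S=\{x:\lambda(x)\ge\mu(x)\}$ and the log-sum (data-processing) inequality, then handle the binary case by a convexity argument in $p$ --- which is exactly the sort of argument the textbook reference is meant to stand in for, and indeed gives the stronger constant $1/\sqrt2$. One small point worth flagging: your second-derivative computation $h''(p)=\frac{1}{p(1-p)}-4$ is for the natural logarithm, whereas the paper's $\relent{\cdot}{\cdot}$ is presumably in base $2$. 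This is harmless --- in base $2$ one gets $h''(p)=\frac{1}{\ln 2}\cdot\frac{1}{p(1-p)}-4\ge\frac{4}{\ln 2}-4>0$, and in any case the base-$2$ relative entropy is \emph{larger} than the natural-log one, so the weak form $\onenorm{\lambda-\mu}\le\sqrt{\relent{\lambda}{\mu}}$ certainly follows --- but it is worth stating which base you are working in when you take derivatives. Your aside about why one differentiates in $p$ rather than $q$ is also correct: $p/q^2+(1-p)/(1-q)^2$ can drop below $4$ (e.g.\ $p=0.2$, $q=0.3$), so the convexity argument would fail in the other variable.
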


\begin{fact}
    \label{fact:-1 inequality}
    Let $\lambda$ and $\mu$ be distributions on $\X$.
    For any subset $\S \subseteq \X$,
    it holds that
    \[ \sum_{x \in \S} \lambda(x) \cdot
    \log \frac{\lambda(x)}{\mu(x)} \geq -1 .\]
\end{fact}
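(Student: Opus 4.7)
The plan is to reduce the sum to a single-term inequality via the log-sum inequality and then bound $a \log a$ on $[0,1]$. Concretely, by the log-sum inequality (which follows from Jensen applied to the convex function $t \mapsto t \log t$),
\[
\sum_{x \in \S} \lambda(x) \log \frac{\lambda(x)}{\mu(x)}
\;\geq\; \lambda(\S) \log \frac{\lambda(\S)}{\mu(\S)},
\]
where I write $\lambda(\S) = \sum_{x\in\S} \lambda(x)$ and $\mu(\S) = \sum_{x\in\S}\mu(x)$, both of which lie in $[0,1]$.

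Having collapsed everything to one term, it remains to show $a \log(a/b) \geq -1$ for all $a,b \in [0,1]$. If $a \geq b$ the quantity is non-negative and there is nothing to prove. If $a < b$, I use $b \leq 1$ to drop the denominator: $a \log(a/b) \geq a \log a$. Then I invoke the elementary calculus fact that $x \mapsto x \log x$ on $[0,1]$ attains its minimum at $x = 1/e$ with value $-1/(e \ln 2) \approx -0.531$, which is comfortably above $-1$. Combining the two cases with the log-sum step finishes the argument.

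There is essentially no obstacle here; the only mild care required is correctly handling the case $\mu(\S) = 0$ (where either $\lambda(\S) = 0$ as well and the sum is $0$ by the usual convention $0 \log 0 = 0$, or the right-hand side of log-sum is $+\infty$ and the inequality is trivial). An alternative route, which I briefly considered, is to use $\ln t \geq 1 - 1/t$ term-by-term to obtain $\sum_{x\in\S}\lambda(x)\log(\lambda(x)/\mu(x)) \geq (\lambda(\S)-\mu(\S))/\ln 2 \geq -1/\ln 2$; this also suffices but gives the slightly weaker constant $-1/\ln 2 \approx -1.44$, so the log-sum route is cleaner.
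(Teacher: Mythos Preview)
The paper states this as a Fact without proof, so there is no argument in the paper to compare against. Your main argument via the log-sum inequality followed by the elementary bound $a\log_2 a \geq -1/(e\ln 2) > -1$ on $[0,1]$ is correct and clean; the handling of the degenerate cases $\mu(\S)=0$ is also fine.

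One small correction on your aside: the alternative route via $\ln t \geq 1 - 1/t$ yields only $\sum_{x\in\S}\lambda(x)\log_2\frac{\lambda(x)}{\mu(x)} \geq -1/\ln 2 \approx -1.44$, which is \emph{weaker} than the stated bound of $-1$ and hence does \emph{not} prove the Fact as written (though it would be harmless for the paper's applications, where the $-1$ is absorbed into larger constants). So your log-sum route is not merely cleaner but actually necessary if you want the constant $-1$.
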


\begin{fact}
\label{fact:subsystem monotone} The $\ell_1$ distance and relative
entropy are monotone non-increasing when subsystems are considered.
Let $X,Y,X^1,Y^1$ be random variables, then
$$ \onenorm{XY - X^1Y^1}  \geq \onenorm{X - X^1} \quad \mbox{and} \quad \relent{XY}{X^1Y^1} \geq \relent{X}{X^1} .$$
\end{fact}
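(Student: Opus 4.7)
The plan is to handle the two inequalities separately, both reducing to one-line applications of facts already stated in the preliminaries.

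For the $\ell_1$ bound, I would start from the definition $\onenorm{XY - X^1Y^1} = \tfrac{1}{2}\sum_{x,y} |\Pr[XY=(x,y)] - \Pr[X^1Y^1=(x,y)]|$ and recall that the marginal on $\X$ is obtained by summing out $y$. Writing
\[ \Pr[X=x] - \Pr[X^1=x] = \sum_{y} \bigl( \Pr[XY=(x,y)] - \Pr[X^1Y^1=(x,y)] \bigr), \]
the triangle inequality pushes the absolute value inside the $y$-sum, and then summing over $x$ and dividing by $2$ gives exactly $\onenorm{X - X^1} \leq \onenorm{XY - X^1Y^1}$. Alternatively, I could invoke the variational characterization $\onenorm{\lambda - \mu} = \max_{S} |\lambda_S - \mu_S|$ and observe that any subset $S \subseteq \X$ lifts to the cylinder $S \times \Y \subseteq \X \times \Y$ with the same probability mass, so the maximum over joint subsets can only be larger.

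For the relative entropy bound, I would apply the chain rule stated in Fact~\ref{fact:relative entropy splitting} directly to $\relent{XY}{X^1Y^1}$ rather than to $\relent{X^1Y^1}{XY}$. This yields
\[ \relent{XY}{X^1Y^1} = \relent{X}{X^1} + \expec{x \leftarrow X}{\relent{Y_x}{Y^1_x}}. \]
Since relative entropy is non-negative (which follows, for instance, from Fact~\ref{fact:one norm and rel ent} via $\relent{\lambda}{\mu} \geq \onenorm{\lambda - \mu}^2 \geq 0$, or directly from Jensen's inequality applied to $-\log$), the expectation term on the right is non-negative and can be dropped, giving $\relent{XY}{X^1Y^1} \geq \relent{X}{X^1}$.

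There is really no obstacle here: both inequalities are standard monotonicity statements (data processing for the trivial marginalization channel), and both have been set up by the preceding facts in the preliminaries. The only care needed is to match the paper's notation, where $XY$ and $X^1Y^1$ denote joint random variables on $\X \times \Y$ and $X$, $X^1$ their respective $\X$-marginals, and to invoke the chain rule in the form that puts $X$ (not $X^1$) on the outside of the expectation so that the remaining term is an honest conditional relative entropy under the first distribution.
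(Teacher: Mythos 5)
The paper states Fact~\ref{fact:subsystem monotone} without proof, treating it as standard background, so there is no paper argument to compare against. Your proof is correct and is precisely the standard argument one would expect: for the $\ell_1$ inequality, marginalization followed by the triangle inequality (or equivalently the variational characterization restricted to cylinder sets), and for the relative entropy inequality, the chain rule of Fact~\ref{fact:relative entropy splitting} applied in the form $\relent{XY}{X^1Y^1} = \relent{X}{X^1} + \expec{x\leftarrow X}{\relent{Y_x}{Y^1_x}}$ followed by non-negativity of the remaining conditional term. One minor remark: you are right to flag that the chain rule must be invoked with $XY$ in the first slot so that the outer expectation is taken under $X$ (the first-argument marginal); Fact~\ref{fact:relative entropy splitting} as printed has the arguments in the opposite order, but it is a generic identity in its two arguments so your swapped instantiation is legitimate. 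No gaps.
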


\begin{fact}
    \label{fact:l1 monotone}
    For  function \fndec{f}{\X\times \R}{\Y } and random variables
    $X, Y$ on $\X$ and $R$ on $\R$, such that $R$ is independent of $(XY)$, it holds that
    \[ \onenorm{Xf(X,R) - Yf(Y,R)} = \onenorm{X-Y}. \]
\end{fact}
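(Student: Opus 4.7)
The plan is to establish the equality by a direct probabilistic computation that exploits the independence of $R$ from $(X,Y)$. Since $R$ is independent of the joint pair $(X,Y)$, it is in particular independent of $X$ alone and of $Y$ alone, so conditioning on $X = x$ or on $Y = x$ leaves the marginal distribution of $R$ unchanged. I would therefore introduce the conditional kernel
\[ p(y \mid x) \defeq \Pr[f(x, R) = y], \]
which, crucially, is the \emph{same} function in both settings because it depends only on $f$ and on the marginal law of $R$. With this notation,
\[ \Pr[Xf(X,R) = (x,y)] = \Pr[X = x] \cdot p(y \mid x), \qquad
   \Pr[Yf(Y,R) = (x,y)] = \Pr[Y = x] \cdot p(y \mid x). \]

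From here, summing the absolute differences over $(x,y) \in \X \times \Y$ and using $\sum_{y} p(y \mid x) = 1$ for every $x$ immediately yields $\sum_{x,y} |\Pr[Xf(X,R)=(x,y)] - \Pr[Yf(Y,R)=(x,y)]| = \sum_{x} |\Pr[X=x] - \Pr[Y=x]|$, which after dividing by $2$ is exactly the claimed equality. If one prefers a non-calculational route, the same conclusion follows from two one-sided inequalities: the direction $\onenorm{Xf(X,R) - Yf(Y,R)} \geq \onenorm{X - Y}$ is immediate from Fact~\ref{fact:subsystem monotone} applied to the first marginal, while the reverse direction follows from the data-processing property of the $\ell_1$ distance applied to the deterministic map $(x,r) \mapsto (x, f(x,r))$, combined with the product-distribution identity $\onenorm{(X,R) - (Y,R)} = \onenorm{X \otimes R - Y \otimes R} = \onenorm{X - Y}$ that again comes from independence. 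There is no real obstacle here; the only care needed is noting that the kernel $p(y \mid x)$ is common to both joint distributions, which is precisely what the independence hypothesis buys.
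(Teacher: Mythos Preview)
Your proof is correct. The paper does not actually prove this statement: it is listed among the ``basic facts'' in the preliminaries and is left unproved, so there is no paper proof to compare against. Your direct computation via the common kernel $p(y\mid x)=\Pr[f(x,R)=y]$ is the standard argument and is entirely sound; the alternative two-sided route you sketch (marginal monotonicity for $\geq$, data processing plus $\onenorm{X\otimes R - Y\otimes R}=\onenorm{X-Y}$ for $\leq$) is also valid.
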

The following definition was introduced by
Holenstein~\cite{Holenstein2007}. It plays a critical role in his
proof of a parallel repetition theorem for two-prover games.
\begin{definition}[\cite{Holenstein2007}]\label{def:embeddable}
For two distributions $(X_0Y_0)$ and $(X_1SY_1T)$, we say that
$(X_0,Y_0)$ is \br{1-\ve}-embeddable in $(X_1S,Y_1T)$ if there
exists a probability distribution $R$ over a set $\R$, which is
independent of $X_0Y_0$ and functions $f_A:\X\times\R\rightarrow\S$,
$f_B:\Y\times\R\rightarrow\T$, such that
\[  \onenorm{X_0Y_0f_A(X_0,R)f_B(Y_0,R) - X_1Y_1ST}
    \leq\ve. \]
\end{definition}
The following lemma was shown by Holenstein~\cite{Holenstein2007}
using a correlated sampling protocol.
\begin{lemma}[\cite{Holenstein2007}]
    \label{lem:distance embeddable}
    For random variables $S$, $X$ and $Y$, if
    \[ \onenorm{SXY-(XY)(S|X)} \leq \ve \] and
    \[ \onenorm{SXY-(XY)(S|Y)} \leq \ve, \]
    then $(X,Y)$ is \br{1-4\ve}-embeddable in $(XS,YS)$.
\end{lemma}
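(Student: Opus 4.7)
The plan is to exhibit an explicit correlated sampling protocol and verify that it realises the desired embedding. I would take the shared randomness $R$ to be an infinite i.i.d.\ sequence $((s_i, r_i))_{i \geq 1}$ with each $s_i$ uniform on $\S$ and each $r_i$ uniform on $[0,1]$, and define $f_A(x, R) \defeq s_{i^*}$ for the least index $i^*$ with $r_{i^*} \leq \Pr[S = s_{i^*} \mid X = x]$, with $f_B(y, R)$ defined analogously using the conditional distribution $\Pr[S = \cdot \mid Y = y]$. A routine rejection-sampling calculation then shows that $f_A(x, R) \sim P_{S \mid X = x}$ and $f_B(y, R) \sim P_{S \mid Y = y}$ for every $x, y$.

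The next step is to bound the disagreement probability. Fixing $(x, y)$ and writing $\mu = P_{S \mid X = x}$, $\nu = P_{S \mid Y = y}$, and $\delta_{xy} = \onenorm{\mu - \nu}$, I would observe that in one round the probability that at least one player accepts equals $(1 + \delta_{xy})/\abs{\S}$ (since $\expec{s}{\max(\mu(s), \nu(s))} = 1 + \delta_{xy}$) while the probability that both accept with the same proposal equals $(1 - \delta_{xy})/\abs{\S}$. A short geometric-series argument then yields $\Pr_R[f_A(x, R) \neq f_B(y, R)] = 2 \delta_{xy}/(1 + \delta_{xy}) \leq 2 \delta_{xy}$. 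Averaging over $(x, y) \sim P_{X_1 Y_1}$ and splitting $\delta_{xy}$ via the triangle inequality against $P_{S \mid XY = x, y}$, the two resulting expected summands are precisely the two distances hypothesized to be at most $\ve$, giving $\Pr[f_A(X_0, R) \neq f_B(Y_0, R)] \leq 4\ve$.

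Finally I would combine the disagreement bound with the marginal match. Routing through the intermediate distribution $X_0 Y_0 f_A(X_0, R) f_A(X_0, R)$, triangle inequality bounds the first leg by the disagreement probability, while the second leg, since both distributions place all mass on the diagonal in the last two coordinates, collapses to $\onenorm{X_0 Y_0 f_A(X_0, R) - X_1 Y_1 S} = \onenorm{(X_1 Y_1)(S \mid X_1) - X_1 Y_1 S} \leq \ve$ by hypothesis. The most delicate point, and the place I expect the main difficulty, is squeezing out the stated constant $4\ve$: the naive triangle inequality above is loose by an additive $\ve$ and yields only $5\ve$, so obtaining the advertised constant requires exploiting the full $2\delta_{xy}/(1 + \delta_{xy})$ disagreement expression (rather than the cruder $2\delta_{xy}$) so that the $(1+\delta_{xy})^{-1}$ factor partially absorbs the marginal-mismatch contribution, or equivalently bounding the full $\ell_1$ distance on $\X \times \Y \times \S \times \S$ directly, splitting into diagonal and off-diagonal terms and recognising that the mass deficit from disagreement is the same as the mass deficit on the diagonal from using $\min(\mu, \nu)/(1 + \delta_{xy})$.
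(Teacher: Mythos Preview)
The paper does not prove this lemma at all: it is quoted verbatim from Holenstein~\cite{Holenstein2007}, with only the remark that it is ``shown \ldots\ using a correlated sampling protocol.'' So there is no in-paper proof to compare your proposal against.

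That said, your sketch is precisely the standard Holenstein correlated-sampling argument the paper is invoking, and the outline is correct. The rejection-sampling construction, the marginal correctness of $f_A,f_B$, the per-pair disagreement bound $\Pr[f_A\neq f_B\mid X=x,Y=y]\le 2\delta_{xy}/(1+\delta_{xy})$, and the averaging step $\expec{xy}{\delta_{xy}}\le 2\ve$ via the triangle inequality through $P_{S|X=x,Y=y}$ are all the right ingredients. Your own diagnosis of the constant is also accurate: the naive triangle route via $X_0Y_0 f_A f_A$ gives $5\ve$, not $4\ve$, because the first leg already costs the full disagreement probability $\le 4\ve$. Recovering $4\ve$ does require the sharper direct bound you describe. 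For what it is worth, in the paper's downstream use (Lemma~\ref{lem:relent embeddable} and the proof of Theorem~\ref{thm:direct product}) the specific constant is immaterial; a bound of $5\ve$ here would only shift $5\sqrt{\ve}$ to $6\sqrt{\ve}$ there and leave the main result unchanged.
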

We will need the following generalization of the previous lemma.
\begin{lemma}\label{lem:relent embeddable}
    For  joint random variables $(A',B',C')$ and $(A,B)$, satisfying
    \begin{align*}
       \relent{A'B'}{AB} &\leq \ve \\
        \expec{(a,c)\leftarrow A',C'} {\relent{B'_{a,c}}{B_a}} &\leq \ve \qquad \mbox{and} \\
        \expec{(b,c)\leftarrow B',C'} {\relent{A'_{b,c}}{A_b}} &\leq \ve,
    \end{align*}
    it holds that $(A,B)$ is \br{1-5\sqrt{\ve}}-embeddable in $(A'C',B'C')$.
\end{lemma}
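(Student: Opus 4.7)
The plan is to reduce to Lemma~\ref{lem:distance embeddable} applied directly to the triple $(A',B',C')$, and then to pay an extra $\sqrt{\ve}$ to switch the starting marginal from $A'B'$ to $AB$. The only nonroutine observation is that the second and third hypotheses are literally relative entropies from $A'B'C'$ to natural Markov distributions, which lets Lemma~\ref{lem:mutual inf and relative ent} convert them into conditional mutual-information bounds on $A'B'C'$.

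Concretely, define $R_2$ on $\A\times\B\times\C$ by $R_2(a,b,c)\defeq \prob{A'=a,\,C'=c}\cdot\prob{B=b\mid A=a}$. Under $R_2$ the conditional of $b$ given $(a,c)$ depends only on $a$, so $R_2$ satisfies the Markov condition $B\leftrightarrow A\leftrightarrow C$. Unfolding $\relent{A'B'C'}{R_2}$ via Fact~\ref{fact:relative entropy splitting} reproduces exactly the expectation in the second hypothesis, so $\relent{A'B'C'}{R_2}\le\ve$. Since Lemma~\ref{lem:mutual inf and relative ent} characterizes $\condmutinf{B'}{C'}{A'}$ as the \emph{minimum} of $\relent{A'B'C'}{M}$ over all Markov $M$ of this shape, one obtains $\condmutinf{B'}{C'}{A'}\le\ve$; an analogous $R_3$ built from $B'C'$ gives $\condmutinf{A'}{C'}{B'}\le\ve$ from the third hypothesis. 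Pinsker's inequality (Fact~\ref{fact:one norm and rel ent}) applied to the Markov distributions $(A'B')(C'\mid A')$ and $(A'B')(C'\mid B')$ that achieve these minima then yields
\[\onenorm{A'B'C'-(A'B')(C'\mid A')}\le\sqrt{\ve}\quad\text{and}\quad\onenorm{A'B'C'-(A'B')(C'\mid B')}\le\sqrt{\ve}.\]

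Plugging $(X,Y,S)=(A',B',C')$ into Lemma~\ref{lem:distance embeddable} (with its $\ve$ set to $\sqrt{\ve}$) produces a shared random source $R$ and functions $f_A,f_B$ satisfying $\onenorm{A'B'\,f_A(A',R)\,f_B(B',R) - A'B'C'C'}\le 4\sqrt{\ve}$. To finish, note that this $R$ is independent of $(A,B)$ as well as of $(A',B')$, so the deterministic map $(a,b,r)\mapsto(a,b,f_A(a,r),f_B(b,r))$, combined with monotonicity of $\ell_1$-distance (Fact~\ref{fact:subsystem monotone}) and Pinsker on the first hypothesis, gives
\[\onenorm{AB\,f_A(A,R)\,f_B(B,R) - A'B'\,f_A(A',R)\,f_B(B',R)} \le \onenorm{AB-A'B'}\le\sqrt{\ve}.\]
A single triangle inequality then yields $\onenorm{AB\,f_A(A,R)\,f_B(B,R) - A'B'C'C'}\le 5\sqrt{\ve}$, which is exactly the claimed $(1-5\sqrt{\ve})$-embeddability of $(A,B)$ in $(A'C',B'C')$.

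The main substantive step is the first one: recognizing the second and third hypotheses as honest KL-divergences from $A'B'C'$ to explicit Markov distributions, so that Lemma~\ref{lem:mutual inf and relative ent} upgrades them into bounds on $\condmutinf{B'}{C'}{A'}$ and $\condmutinf{A'}{C'}{B'}$. Everything afterward is Pinsker plus Lemma~\ref{lem:distance embeddable} plus a triangle inequality; in particular, no auxiliary joint distribution on $(A,B,C)$ needs to be constructed, and the starting marginal is only changed from $A'B'$ to $AB$ in the very last step, which is what keeps the final constant at $5$ rather than something larger.
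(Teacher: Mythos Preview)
Your proof is correct and follows the same overall arc as the paper's: establish $\onenorm{A'B'C'-(A'B')(C'\mid A')}\le\sqrt{\ve}$ (and the symmetric bound), apply Lemma~\ref{lem:distance embeddable} to get $(A',B')$ $(1-4\sqrt{\ve})$-embeddable in $(A'C',B'C')$, and then spend one more $\sqrt{\ve}$ to replace $A'B'$ by $AB$. The only genuine difference is in how you reach the $\ell_1$ bounds. The paper does it by a direct subtraction: it shows
\[
\expec{(a,c)\leftarrow A',C'}{\relent{B'_{a,c}}{B'_a}}
\;\le\;
\expec{(a,c)\leftarrow A',C'}{\relent{B'_{a,c}}{B_a}}
\]
(the gap being $\expec{a\leftarrow A'}{\relent{B'_a}{B_a}}\ge 0$), and then identifies the left side as $\relent{A'B'C'}{(A'B')(C'\mid A')}$. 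You instead recognize the hypothesis itself as $\relent{A'B'C'}{R_2}$ for an explicit Markov $R_2$ and invoke Lemma~\ref{lem:mutual inf and relative ent} twice (once as an upper bound via $R_2$, once to evaluate $\condmutinf{B'}{C'}{A'}$ at its optimizer $(A'B')(C'\mid A')$). Both derivations are short; yours is a touch more conceptual, the paper's more self-contained (it never needs Lemma~\ref{lem:mutual inf and relative ent}). One small citation fix: the step
\[
\onenorm{AB\,f_A(A,R)\,f_B(B,R)-A'B'\,f_A(A',R)\,f_B(B',R)}\le\onenorm{AB-A'B'}
\]
is Fact~\ref{fact:l1 monotone} (or data processing for $\ell_1$), not Fact~\ref{fact:subsystem monotone}.
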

\begin{proof}
    Using the definition of the relative entropy, we have
    the following.
    \begin{align*}
        \expec{(a,c)\leftarrow A',C'} {\relent{B'_{a,c}}{B_a}}
        - \expec{(a,c)\leftarrow A',C'} {\relent{B'_{a,c}}{B'_a}}
        &= \expec{(a,b,c) \leftarrow A',B',C'}
        {\log \frac{\prob{B'=b|A'=a}}{\prob{B=b|A=a}}} \\
        &= \expec{a \leftarrow A'}{\relent{B'_a}{B_a}} \quad \geq \quad 0 .
    \end{align*}
    This means that
    \begin{align}
        \expec{(a,c)\leftarrow A',C'} {\relent{B'_{a,c}}{B'_a}}
        \leq \expec{(a,c)\leftarrow A',C'} {\relent{B'_{a,c}}{B_a}}
        \leq \ve.
        \label{eqn:e0}
    \end{align}
    Then
    \begin{align}
        \expec{(a,c)\leftarrow A',C'} {\relent{B'_{a,c}}{B_a'}}
        &= \relent{A'B'C'}{\br{A'C'}\br{B'|A'}}
        \label{eqn:e1} \\
        &= \relent{A'B'C'}{\br{A'B'}\br{C'|A'}}
        \label{eqn:e2} \\
        &\geq \onenorm{A'B'C' - \br{A'B'}\br{C'|A'}}^2 .
        \label{eqn:e3}
    \end{align}
    Above, Eq.~\eqref{eqn:e1} follows from the definition of the relative
    entropy, Eq.~\eqref{eqn:e2} follows because $\br{A'C'}\br{B'|A'}$
    and $\br{A'B'}\br{C'|A'}$ are identically distributed,     and Eq.~\eqref{eqn:e3} follows from Fact~\ref{fact:one norm and rel ent}.
    Now from Equations \eqref{eqn:e3} and \eqref{eqn:e0} we get
    \begin{align*}
        \onenorm{A'B'C' - \br{A'B'}\br{C'|A'}} &\leq \sqrt{\varepsilon}.
        \intertext{By similar arguments we get}
        \onenorm{A'B'C' - \br{A'B'}\br{C'|B'}} &\leq \sqrt{\varepsilon}.
    \end{align*}
The inequalities above and  Lemma~\ref{lem:distance embeddable} imply
that $(A',B')$ is $\br{1-4\sqrt{\ve}}$-embeddable in $(A'C',B'C')$.
Furthermore from Fact~\ref{fact:one norm and rel ent} and
$\relent{A'B'}{AB} \leq \ve $ we get
$$ \onenorm{A'B' - AB} \leq \sqrt{\ve} . $$
Finally using the inequality above and Fact~\ref{fact:l1 monotone} we get that
$(A,B)$ is $\br{1-5\sqrt{\ve}}$-embeddable in $(A'C',B'C')$.
\end{proof}

\subsection*{Communication complexity}
\label{sec:Communication complexity} Let $f \subseteq \X \times \Y
\times \Z$ be a relation, $t \geq 1$ be an integer and $\ve \in
(0,1)$. In this work we only consider {\em complete} relations, that
is for every $(x,y) \in \X \times \Y$, there is some $z \in \Z$ such
that $(x,y,z) \in f$. In the two-party $t$-message public-coin model
of communication, Alice with input $x \in \X$ and Bob with input $y
\in \Y$, do local computation using public coins shared between them
and exchange $t$ messages, with Alice sending the first
message. At the end of their protocol the party receiving the $t$-th
message outputs some $z \in \Z$. The output is declared correct if
$(x,y,z) \in f$ and wrong otherwise. Let
$\mathrm{R}^{(t),\mathrm{pub}}_{\ve}(f)$ represent the two-party
$t$-message public-coin communication complexity of $f$ with worst
case error $\ve$, i.e., the communication of the best two-party
$t$-message public-coin protocol for $f$ with error for each input
$(x,y)$ being at most~$\ve$. We similarly consider two-party
$t$-message deterministic protocols where there are no public coins
used by Alice and Bob. Let $\mu \in \X \times \Y$ be a distribution.
We let $\mathrm{D}_{\ve}^{(t),\mu}(f)$ represent the two-party
$t$-message distributional communication complexity of $f$ under
$\mu$ with expected error $\ve$, i.e., the communication of the best
two-party $t$-message deterministic protocol for $f$, with
distributional error  (average error over  the inputs) at most $\ve$
under $\mu$.    Following is a consequence of the  min-max theorem
in game theory, see e.g.,~\cite[Theorem~3.20,
page~36]{Kushilevitz96}.
\begin{lemma}[Yao's principle, \cite{Yao:1979}]
\label{lem:yaos principle}
$\mathrm{R}^{(t),\mathrm{pub}}_{\ve}(f)=\max_{\mu}\mathrm{D}^{(t),\mu}_{\ve}(f)$.
\end{lemma}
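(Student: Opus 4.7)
The plan is to prove the two inequalities separately; the direction $\mathrm{R}^{(t),\mathrm{pub}}_{\ve}(f) \geq \max_\mu \mathrm{D}^{(t),\mu}_{\ve}(f)$ follows from a direct averaging argument, while the reverse direction uses von Neumann's min-max theorem. For the easy direction, I observe that any public-coin $t$-message protocol with worst-case error at most $\ve$ and communication $c$ has, for any input distribution $\mu$, expected error (over inputs and public coins jointly) at most $\ve$, so fixing the public coins to a value that is optimal for $\mu$ yields a deterministic $t$-message protocol with the same communication $c$ and average error at most $\ve$ under $\mu$. This shows $\mathrm{D}^{(t),\mu}_{\ve}(f) \leq \mathrm{R}^{(t),\mathrm{pub}}_{\ve}(f)$ for every $\mu$, and hence the inequality after taking the maximum over $\mu$.

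For the reverse direction, let $c \defeq \max_\mu \mathrm{D}^{(t),\mu}_{\ve}(f)$ and consider the finite two-player zero-sum game in which the minimising player picks a deterministic $t$-message protocol $P$ with communication at most $c$, the maximising player picks an input $(x,y) \in \X \times \Y$, and the payoff to the maximiser is the indicator that $P$ errs on $(x,y)$. Applying the classical min-max theorem to this finite game yields
\[ \min_{\pi} \max_{(x,y)} \expec{P \leftarrow \pi}{\mathrm{err}_P(x,y)} \;=\; \max_\mu \min_P \expec{(x,y) \leftarrow \mu}{\mathrm{err}_P(x,y)} , \]
where $\pi$ ranges over distributions over deterministic protocols of communication at most $c$. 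By the definition of $c$, for every $\mu$ some deterministic $P$ attains error at most $\ve$, so the right-hand side is at most $\ve$, hence so is the left-hand side. Any $\pi^*$ attaining the outer minimum on the left specifies a public-coin $t$-message protocol of communication at most $c$ and worst-case error at most $\ve$, establishing $\mathrm{R}^{(t),\mathrm{pub}}_{\ve}(f) \leq c$.

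The only point requiring care, and the main obstacle, is the finiteness needed to apply the elementary form of the min-max theorem. I would handle this by parametrising deterministic $t$-message protocols of communication at most $c$ as labelled rooted trees of depth at most $c$, with leaves labelled by outputs in $\Z$ and each internal node labelled by a function from $\X$ or $\Y$ to $\{0,1\}$ according to whose turn it is to speak at that node. Since $\X$, $\Y$ and $\Z$ are all finite, the set of such trees is finite, both strategy simplices in the game are finite-dimensional, and the classical min-max theorem applies directly with no further approximation or topological subtleties.
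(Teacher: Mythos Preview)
The paper does not actually prove this lemma; it simply states it as a consequence of the min-max theorem in game theory and points to Kushilevitz--Nisan for a reference. Your proposal supplies exactly the standard argument that the paper alludes to: the easy direction by averaging and fixing coins, and the hard direction via von Neumann's min-max theorem applied to the finite zero-sum game between deterministic protocols of bounded cost and inputs. This is correct and is the intended proof; there is nothing further to compare.
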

The following fact about communication protocols can be verified easily.
\begin{fact}
\label{fact:commindep} Let there be $t$ messages $M_1, \ldots, M_t$
in a deterministic communication protocol between Alice and Bob with
inputs $X,Y$ respectively where $X$ and $Y$ are independent. Then
for any $s \in [t]$, $X$ and $Y$ are independent even conditioned on
$M_1, \ldots, M_s$.
\end{fact}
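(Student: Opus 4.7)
The plan is to prove this by induction on $s$. The base case $s=0$ is precisely the hypothesis that $X$ and $Y$ are independent, i.e.\ $\mutinf{X}{Y}=0$. For the inductive step, I assume $\condmutinf{X}{Y}{M_1 \ldots M_s} = 0$ and aim to show $\condmutinf{X}{Y}{M_1 \ldots M_{s+1}} = 0$.

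Without loss of generality suppose Alice sends the $(s+1)$-th message, so that $M_{s+1}$ is a deterministic function of $X$ and $M_1, \ldots, M_s$. This immediately gives $\condmutinf{M_{s+1}}{Y}{X M_1 \ldots M_s} = 0$, since conditioning on $(X, M_1, \ldots, M_s)$ fixes $M_{s+1}$. Two applications of the chain rule (with conditioning on $M_1, \ldots, M_s$ throughout) yield
\begin{align*}
\condmutinf{X M_{s+1}}{Y}{M_1 \ldots M_s}
&= \condmutinf{X}{Y}{M_1 \ldots M_s} + \condmutinf{M_{s+1}}{Y}{X M_1 \ldots M_s} = 0, \\
\condmutinf{X M_{s+1}}{Y}{M_1 \ldots M_s}
&= \condmutinf{M_{s+1}}{Y}{M_1 \ldots M_s} + \condmutinf{X}{Y}{M_1 \ldots M_{s+1}}.
\end{align*}
Since both terms on the right in the second line are non-negative and their sum is $0$, we conclude $\condmutinf{X}{Y}{M_1 \ldots M_{s+1}} = 0$. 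If Bob sends $M_{s+1}$ instead, the roles of $X$ and $Y$ are simply swapped.

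This is just the familiar \emph{rectangle property} of deterministic two-party protocols expressed in information-theoretic language: at each step the set of $(x,y)$ consistent with a transcript prefix remains a combinatorial rectangle, and the fresh message $M_{s+1}$ further restricts only the sender's coordinate. There is no real obstacle; the only work is the chain-rule bookkeeping above.
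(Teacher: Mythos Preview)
Your proof is correct. The paper states this as a Fact without proof, remarking only that it ``can be verified easily,'' so there is no paper proof to compare against; your induction via two applications of the chain rule is a clean and standard way to carry out the verification.
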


\section{Proof of Theorem~\ref{thm:direct product}}
\label{sec:main theorem} We start by showing a few lemmas which
are helpful in the proof of the main result. The following lemma was
shown by Jain~\cite{Jain:2011} and follows primarily from a  message
compression argument due to Braverman and Rao~\cite{Braverman2011}.
\begin{theorem}[\cite{Braverman2011,Jain:2011}]
    \label{thm:braverman rao protocol}
    Let $\delta >0, c \geq 0$. Let $X', Y', N$ be random variables for which
    $Y' \leftrightarrow X' \leftrightarrow N$ is a Markov chain and the following holds,
    \[ \pr{(x,y,m)\leftarrow X',Y',N}
    {\log\frac{\prob{N=m|X'=x}}{\prob{N=m|Y'=y}}>c}
    \leq\delta . \]
    There exists a public-coin protocol between Alice and Bob, with inputs  $X', Y'$ respectively, with a single message from Alice to Bob of  $c+\O(\log(1/\delta))$ bits, such that at the end of  the protocol, Alice and Bob both possess a random variable $M$ satisfying $ \onenorm{X'Y'N-X'Y'M} \leq  2\delta $.
\end{theorem}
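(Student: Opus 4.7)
I would prove this via a shared-randomness sampling protocol in the style of Braverman--Rao. Write $P_x(m) = \Pr[N = m \mid X' = x]$ and $Q_y(m) = \Pr[N = m \mid Y' = y]$. By the Markov condition $Y' \leftrightarrow X' \leftrightarrow N$, the target conditional $\Pr[N = m \mid X' = x, Y' = y]$ equals $P_x(m)$, so Alice must produce a sample from $P_x$ and communicate it to Bob, whose only side information is $Q_y$.

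\textbf{The protocol.} Interpret the public coins as a collection of i.i.d.\ uniform hashes $\{h(m)\}_{m \in \M}$ on $[0,1]$, where $\M$ is the support of $N$. Alice computes $m^* = \arg\min_{m} h(m)^{1/P_x(m)}$; a standard argmin identity gives $m^* \sim P_x$. Bob independently sorts $\M$ in increasing order of $h(m)^{1/Q_y(m)}$, producing a list $\pi_Q(1), \pi_Q(2), \ldots$. Alice computes the rank $r$ of $m^*$ in Bob's list and transmits $r$ via a prefix-free integer code (e.g.\ Elias). Bob outputs $M = \pi_Q(r) = m^*$, so both parties end up with the same random variable.

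\textbf{Length and truncation.} The key computation, implicit in Braverman--Rao, is that conditioned on $m^* = m$, the expected rank satisfies $\mathbb{E}[r \mid m^* = m] = O(P_x(m)/Q_y(m))$. By the theorem's hypothesis, $\log(P_x(m^*)/Q_y(m^*)) \le c$ except on an event of probability at most $\delta$; on this ``good'' event, Markov's inequality gives $\Pr[r > 2^c/\delta \mid \mathrm{good}] = O(\delta)$. Hence $r \le 2^c/\delta$ except on an event of total probability $O(\delta)$, and a prefix-free encoding of $r$ requires $\log(2^c/\delta) + O(\log\log(2^c/\delta)) = c + O(\log(1/\delta))$ bits. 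The protocol truncates at this length and declares failure on the exceptional event; the constants can be chosen so that the failure probability is at most $2\delta$.

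\textbf{Error bookkeeping and obstacle.} On the non-failure event, $M = m^*$ has exactly the target distribution $P_x = \Pr[N \mid X'Y']$, so by coupling $\onenorm{X'Y'N - X'Y'M} \le 2\delta$, as desired. The main technical hurdle is the expected-rank estimate: verifying that Alice's argmin sample sits near the top of Bob's sorted list whenever the log-ratio is small. This is a delicate coupling calculation under the joint law of the two argmin samplers sharing the hash $h$, requiring that one condition carefully on $h(m^*)$ and sum the tail contributions of the competing $m' \neq m^*$. Once this estimate is in hand, the prefix-free coding and union-bound bookkeeping are routine, yielding the claimed message length and total variation bound.
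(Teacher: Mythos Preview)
The paper does not actually prove this theorem; it is stated with a citation to \cite{Braverman2011,Jain:2011} and used as a black box in the proof of Lemma~\ref{lem:generalized braverman rao}. So there is no paper proof to compare against, only the original Braverman--Rao argument.

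Your proposal has a genuine gap at the protocol level: you write that ``Alice computes the rank $r$ of $m^*$ in Bob's list and transmits $r$,'' but Alice does not know $y$ and hence does not know $Q_y$; she therefore cannot compute Bob's sorted order $\pi_Q$ at all. The protocol as written is not executable by Alice. The rank bound $\mathbb{E}[r \mid m^* = m] = O(P_x(m)/Q_y(m))$ may well be true, but it is useless if Alice cannot compute $r$.

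The actual Braverman--Rao protocol resolves this asymmetry differently. The public randomness is an infinite i.i.d.\ sequence $(m_1,p_1),(m_2,p_2),\ldots$ with $m_i$ uniform over the message space and $p_i$ uniform on $[0,1]$. Alice picks the first index $i^*$ with $p_{i^*}\le P_x(m_{i^*})$, which yields $m_{i^*}\sim P_x$. Bob forms the candidate set $S=\{j: p_j\le 2^c\,Q_y(m_j)\}$; on the good event $P_x(m_{i^*})\le 2^c Q_y(m_{i^*})$, the index $i^*$ lies in $S$, and the expected number of indices in $S$ up to $i^*$ is $O(2^c)$. Alice then sends $c+O(\log(1/\delta))$ random hash bits of $i^*$ (or of $m_{i^*}$), which is something she \emph{can} compute, and Bob searches his candidate list for a match. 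The error and length analysis are then essentially what you wrote. If you repair the protocol along these lines, the rest of your bookkeeping goes through.
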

We will need the following generalization of the above.
\begin{lemma}\label{lem:generalized braverman rao}
    Let  $ c\geq0, 1 > \ve >0, \ve' >0 $. Let $X', Y', M'$ be random variables for which the following holds,
    \[ \condmutinf{X'}{M'}{Y'} \leq c
    \text{ and }
    \condmutinf{Y'}{M'}{X'} \leq \ve . \]
There exists a public-coin  protocol between Alice and Bob, with
inputs $X',Y'$ respectively, with a single message from Alice to Bob
of   $\frac{c+5}{\ve'}+\O(\log\frac{1}{\ve'})$ bits,
such that at the end of the protocol,
    Alice and Bob both possess a random variable $M$  satisfying $ \onenorm{X'Y'M'-X'Y'M} \leq 3\sqrt{\varepsilon}+6\ve' $.
\end{lemma}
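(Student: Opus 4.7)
The plan is to reduce this lemma to Theorem~\ref{thm:braverman rao protocol}, which requires an \emph{exact} Markov chain together with a high-probability tail bound on a log-ratio (our hypotheses give neither directly). First I would perform a ``Markov closure'': define a new random variable $N$ jointly with $(X',Y')$ so that $X'Y'N$ is distributed as $(X'Y')(M'|X')$. By construction $Y'\leftrightarrow X'\leftrightarrow N$ is a Markov chain, and by Lemma~\ref{lem:mutual inf and relative ent} we have $\relent{X'Y'M'}{X'Y'N}=\condmutinf{Y'}{M'}{X'}\leq\ve$, so Fact~\ref{fact:one norm and rel ent} yields $\onenorm{X'Y'M'-X'Y'N}\leq\sqrt{\ve}$. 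Hence if we can produce an $M$ close to $N$ in joint distribution, we will also be close to $M'$.

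The second step is to control $\condmutinf{X'}{N}{Y'}$ computed in the new distribution. Using chain-rule identities together with the fact that the Markov closure preserves the marginals on $X'Y'$ (equal to $P(X'Y')$) and on $X'N$ (equal to $P(X'M')$), I would argue that this quantity is at most $c+\O(\ve)$. However, Theorem~\ref{thm:braverman rao protocol} demands a tail bound on $\log\frac{P(N=m|X'=x)}{P(N=m|Y'=y)}$, not on its expectation (which equals the conditional mutual information just bounded). I would convert via Markov's inequality; since this log-ratio can be negative, I would invoke Fact~\ref{fact:-1 inequality} to show the contribution from the negative region is at least $-1$, so the expectation restricted to the positive region is at most $c+\O(\ve)+1$. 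Markov's inequality then gives $\Pr\!\bigl[\log\frac{P(N=m|X'=x)}{P(N=m|Y'=y)}>(c+5)/\ve'\bigr]\leq\ve'$ after choosing constants appropriately.

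Finally I would apply Theorem~\ref{thm:braverman rao protocol} with threshold $(c+5)/\ve'$ and error parameter $\ve'$, producing a single-message protocol of length $(c+5)/\ve'+\O(\log(1/\ve'))$ whose output $M$ satisfies $\onenorm{X'Y'N-X'Y'M}\leq 2\ve'$. Combining with the Markov-closure step via the triangle inequality gives $\onenorm{X'Y'M'-X'Y'M}\leq\sqrt{\ve}+2\ve'$, comfortably within the claimed $3\sqrt{\ve}+6\ve'$, with the slack absorbing the $\O(\ve)$ overhead in step two. The main obstacle is exactly step two: carefully bounding $\condmutinf{X'}{N}{Y'}$ in the modified distribution by $c+\O(\ve)$ requires a delicate chain-rule manipulation that tracks how mutual-information terms transform under the Markov closure, exploiting the preserved $P(X'Y')$ and $P(X'M')$ marginals together with monotonicity of relative entropy applied to the $\ve$-bound $\relent{P(X'Y'M')}{Q(X'Y'N)}\leq\ve$.
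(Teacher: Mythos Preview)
Your overall architecture matches the paper's exactly: define $N$ via the Markov closure $X'Y'N\defeq(X'Y')(M'|X')$, use $\relent{X'Y'M'}{X'Y'N}=\condmutinf{Y'}{M'}{X'}\leq\ve$ to get $\onenorm{X'Y'M'-X'Y'N}\leq\sqrt\ve$, and then invoke Theorem~\ref{thm:braverman rao protocol}. The divergence is entirely in how you obtain the tail bound, and that is where the proposal has a real gap. Your step two---showing $\condmutinf{X'}{N}{Y'}\leq c+\O(\ve)$ in the $Q$-distribution---is asserted but not proved; you flag it yourself as ``the main obstacle.'' It is not clear that the preserved marginals $Q(X'Y')=P(X'Y')$, $Q(X'N)=P(X'M')$ together with $\relent{P}{Q}\leq\ve$ suffice. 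One can write
\[
I_Q(X';N|Y')=I_P(X';M'|Y')+\bigl[H_Q(N|Y')-H_P(M'|Y')\bigr]-I_P(Y';M'|X'),
\]
and the bracketed conditional-entropy difference is \emph{not} in general controlled by $\relent{P(Y'M')}{Q(Y'N)}\leq\ve$ without alphabet-size dependence (small $\relent{P}{Q}$ does not force $H(Q)-H(P)$ small). So the ``delicate chain-rule manipulation'' you allude to may simply not exist in the form you want.

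The paper sidesteps this entirely. Rather than bounding the expectation of the log-ratio, it decomposes the log-ratio pointwise as
\[
\log\tfrac{\Pr[N=m\mid X'=x]}{\Pr[N=m\mid Y'=y]}
=\log\tfrac{\Pr[N=m\mid x,y]}{\Pr[M'=m\mid x,y]}
+\log\tfrac{\Pr[M'=m\mid x,y]}{\Pr[M'=m\mid y]}
+\log\tfrac{\Pr[M'=m,\,y]}{\Pr[N=m,\,y]},
\]
and bounds the tail of each piece separately \emph{under the original distribution $P=(M',X',Y')$}: the middle term uses $\condmutinf{X'}{M'}{Y'}\leq c$ directly, while the two outer terms each use $\relent{X'Y'M'}{X'Y'N}\leq\ve$ (plus monotonicity), in every case via exactly the ``expectation bound $+$ Fact~\ref{fact:-1 inequality}'' trick you describe. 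This yields a tail of $3\ve'$ under $P$ at threshold $(c+5)/\ve'$, which is then transferred to $Q=(N,X',Y')$ at an additive cost of $\sqrt\ve$. Applying Theorem~\ref{thm:braverman rao protocol} with $\delta=3\ve'+\sqrt\ve$ gives $\onenorm{X'Y'N-X'Y'M}\leq 2\delta=6\ve'+2\sqrt\ve$, and the final triangle inequality with $\onenorm{X'Y'M'-X'Y'N}\leq\sqrt\ve$ produces exactly $3\sqrt\ve+6\ve'$. So the constants in the lemma are tight for this route, not slack to absorb an unproved $\O(\ve)$ as you suggest.
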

\begin{proof}
Let us introduce a new random variable $N$ with joint distribution
$X'Y'N \defeq (X'Y')(M'|X')$. Note that $Y' \leftrightarrow X'
\leftrightarrow N$ is a Markov chain. Using Lemma~\ref{lem:mutual
inf and relative ent}, we have
\[ \relent{X'Y'M'}{X'Y'N} = \condmutinf{Y'}{M'}{X'} \leq \ve .\]
Applying Fact~\ref{fact:one norm and rel ent}, we get that
$\onenorm{X'Y'M'-X'Y'N} \leq \sqrt{\ve}$. Using this, the following
claim, and Theorem~\ref{thm:braverman rao protocol} we conclude the
desired. 
\end{proof}
\begin{claim}
\[  \pr{(m,x,y)\leftarrow M,X',Y'}
    {\log \frac{\prob{N=m|X'=x}}{\prob{N=m|Y'=y}}
    \geq \frac{c+5}{\ve'}}
    \leq 3 \ve'+\sqrt{\ve}.\]
\end{claim}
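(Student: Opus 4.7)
The plan is to bound the target probability first under the original joint distribution $X'Y'M'$ and then transfer to $X'Y'N$ using the already-established estimate $\onenorm{X'Y'M'-X'Y'N}\le\sqrt{\ve}$, which adds at most $\sqrt{\ve}$ to the probability of any event. Setting $Z(x,y,m):=\log\frac{\prob{N=m|X'=x}}{\prob{N=m|Y'=y}}$ and $c':=(c+5)/\ve'$, I plan to bound $\expec{(x,y,m)\leftarrow X'Y'M'}{Z^+}$, where $Z^+:=\max(Z,0)$, by a quantity of size $O(c)$ and then apply Markov's inequality.

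I will compute $\expec{(x,y,m)\leftarrow X'Y'M'}{Z}$ directly. The construction of $N$ gives $\prob{N=m|X'=x}=\prob{M'=m|X'=x}$, so the numerator contribution is $-\mathrm{H}(M'|X')$, and the denominator contribution, interpreted as a cross-entropy, equals $\mathrm{H}(M'|Y')+\expec{y\leftarrow Y'}{\relent{M'_y}{N_y}}$. The last term is exactly $\relent{Y'M'}{Y'N}$, which by Fact~\ref{fact:subsystem monotone} is bounded by $\relent{X'Y'M'}{X'Y'N}=\condmutinf{Y'}{M'}{X'}\le\ve$. Combined with the chain-rule identity $\mathrm{H}(M'|Y')-\mathrm{H}(M'|X')=\condmutinf{X'}{M'}{Y'}-\condmutinf{Y'}{M'}{X'}\le c$, this yields $\expec{(x,y,m)\leftarrow X'Y'M'}{Z}\le c+\ve$.

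The hard part will be bounding $\expec{(x,y,m)\leftarrow X'Y'M'}{Z^-}$, where $Z^-:=\max(-Z,0)$, because Fact~\ref{fact:-1 inequality} does not apply directly: the sampling distribution $M'_{x,y}$ matches neither side of the log-ratio. My idea is to add and subtract $\log\prob{M'=m|X'=x,Y'=y}$, splitting $-Z$ into $\log(\prob{N=m|Y'=y}/\prob{M'=m|X'=x,Y'=y})+\log(\prob{M'=m|X'=x,Y'=y}/\prob{M'=m|X'=x})$. On the set $S_{x,y}:=\{m:Z(x,y,m)<0\}$, Fact~\ref{fact:-1 inequality} applied with $\lambda=M'_{x,y}$ and $\mu=N_y$ bounds the first piece by $1$, and applied on the complement $S_{x,y}^c$ with $\mu=M'_x$ it bounds the second piece by $\relent{M'_{x,y}}{M'_x}+1$ (since $\sum_{m\in S_{x,y}^c}\lambda\log(\lambda/\mu)\ge-1$ implies $\sum_{m\in S_{x,y}}\lambda\log(\lambda/\mu)\le\relent{\lambda}{\mu}+1$). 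Averaging over $(x,y)\sim X'Y'$ turns the latter relative entropy into $\condmutinf{M'}{Y'}{X'}\le\ve$, giving $\expec{(x,y,m)\leftarrow X'Y'M'}{Z^-}\le 2+\ve$. Hence $\expec{(x,y,m)\leftarrow X'Y'M'}{Z^+}\le c+2+2\ve\le 3(c+5)$ for $\ve\le 1$, Markov's inequality yields $\Pr_{X'Y'M'}[Z\ge c']\le 3\ve'$, and the transfer to $X'Y'N$ then yields the claimed $3\ve'+\sqrt{\ve}$.
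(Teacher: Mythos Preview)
Your argument is correct. The computation of $\expec{X'Y'M'}{Z}$ is clean and the bound on $\expec{X'Y'M'}{Z^-}$ via the two-term split and Fact~\ref{fact:-1 inequality} goes through as you describe; combined with Markov and the $\ell_1$ transfer to $X'Y'N$ this yields the claim (in fact with the sharper constant $\ve'$ rather than $3\ve'$, since $c+2+2\ve<c+5$ when $\ve<1$).

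The paper takes a different route. Instead of bounding an expectation and applying Markov, it writes
\[
\log\frac{\prob{N=m|X'=x}}{\prob{N=m|Y'=y}}
=\log\frac{\prob{N=m|X'=x,Y'=y}}{\prob{M'=m|X'=x,Y'=y}}
+\log\frac{\prob{M'=m|X'=x,Y'=y}}{\prob{M'=m|Y'=y}}
+\log\frac{\prob{M'=m,Y'=y}}{\prob{N=m,Y'=y}},
\]
and bounds the \emph{tail probability} of each of the three pieces separately (each by $\ve'$, using Fact~\ref{fact:-1 inequality} together with, respectively, $\relent{X'Y'M'}{X'Y'N}\le\ve$, $\condmutinf{X'}{M'}{Y'}\le c$, and $\relent{Y'M'}{Y'N}\le\ve$), then union-bounds under $X'Y'M'$ and finally transfers to $X'Y'N$. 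Your approach is arguably more conceptual: the identity $\expec{X'Y'M'}{Z}=\condmutinf{X'}{M'}{Y'}-\condmutinf{Y'}{M'}{X'}+\relent{Y'M'}{Y'N}$ makes transparent why the threshold should scale like $c/\ve'$, and only two applications of Fact~\ref{fact:-1 inequality} are needed. The paper's per-term tail argument, on the other hand, avoids computing any global expectation and would adapt more readily if one wanted individual deviation bounds on the three pieces.
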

\begin{proof}
For any $m$, $x$, $y$ it holds that
\begin{align}
    \log \frac{\prob{N=m|X'=x}}{\prob{N=m|Y'=y}}
    &= \log \frac{\prob{N=m|X'=x, Y'=y}}{\prob{N=m|Y'=y}} \nonumber \\
    &= \log \frac{\prob{N=m|X'=x, Y'=y}}{\prob{M'=m|X'=x, Y'=y}} +
    \log \frac{\prob{M'=m|X'=x, Y'=y}}{\prob{M'=m|Y'=y}} \nonumber \\
    &\qquad {} + \log \frac{\prob{M'=m, Y'=y}}{\prob{N=m, Y'=y}}.
\label{eqn:eq6}
\end{align}
We bound each term above separately. For the first one, let us
define the set
\[  G_1 \defeq \set{(m,x,y) :
    \log \frac{\prob{N=m|X'=x, Y'=y}}
    {\prob{M'=m|X'=x, Y'=y}} \leq \frac{\ve+1}{\ve'}} .\]
Consider,
\begin{align}
    0 &\geq -\expec{(x,y)\leftarrow X',Y'} {\relent{M'_{xy}}{N_{xy}}} \nonumber \\
    &= \expec{(m,x,y)\leftarrow M',X',Y'}
    {\log \frac{\prob{N=m|X'=x, Y'=y}}{\prob{M'=m|X'=x, Y'=y}}}
    \label{eqn:a1} \\
    &= \sum_{(m,x,y) \in G_1}
    \prob{M'=m, X'=x, Y'=y} \cdot
    \log \frac{\prob{N=m|X'=x, Y'=y}}{\prob{M'=m|X'=x, Y'=y}} \nonumber \\
    &\qquad {} + \sum_{(m,x,y) \notin G_1}
    \prob{M'=m, X'=x, Y'=y} \cdot
    \log \frac{\prob{N=m|X'=x, Y'=y}}{\prob{M'=m|X'=x, Y'=y}} \nonumber \\
    &\geq \sum_{(m,x,y) \in G_1}
    \prob{M'=m, X'=x, Y'=y} \cdot
    \log \frac{\prob{N=m|X'=x, Y'=y}}{\prob{M'=m|X'=x, Y'=y}} \nonumber \\
    &\qquad {} + \prob{\br{M',X',Y'} \notin G_1} \cdot \frac{\ve+1}{\ve'}
    \label{eqn:a2} \\
    &= \sum_{(m,x,y) \notin G_1}
    \prob{M'=m, X'=x, Y'=y} \cdot
    \log \frac{\prob{M'=m|X'=x, Y'=y}}{\prob{N=m|X'=x, Y'=y}} \nonumber \\
    &\qquad {} - \relent{M'X'Y'}{NX'Y'}
    + \prob{\br{M',X',Y'} \notin G_1} \cdot \frac{\ve+1}{\ve'}
    \label{eqn:a3} \\
    &\geq -1 - \varepsilon
    + \prob{\br{M',X',Y'} \notin G_1} \cdot \frac{\ve+1}{\ve'} .
    \label{eqn:a4}
\end{align}
Above, Eq.~\eqref{eqn:a1} and Eq.~\eqref{eqn:a3} follow from the
definition of the relative entropy, and Eq.~\eqref{eqn:a2} follows
from the definition of $G_1$. To get Eq.~~\eqref{eqn:a4}, we use
Fact~\ref{fact:-1 inequality}. Eq.~\eqref{eqn:a4} implies that
$\prob{\br{M',X',Y'} \notin G_1} \leq \ve'$.

To upper bound the second term let  us define
\[  G_2 \defeq \set{(m,x,y) :
    \log \frac{\prob{M'=m|X'=x, Y'=y}}
    {\prob{M'=m|Y'=y}} \leq \frac{c+1}{\ve'}} .\]
Consider,
\begin{align}
    c &\geq \condmutinf{M'}{X'}{Y'} \label{eqn:b1} \\
    &= \expec{(m,x,y)\leftarrow M',X',Y'}
    {\log \frac{\prob{M'=m|X'=x, Y'=y}}{\prob{M'=m|Y'=y}}}
    \label{eqn:b2} \\
    &=\sum_{(m,x,y)\in G_2} \prob{M'=m, X'=x, Y'=y} \cdot
    \log \frac{\prob{M'=m|X'=x,Y'=y}}{\prob{M'=m|Y'=y}} \nonumber\\
    &\qquad {} +\sum_{(m,x,y)\not\in G_2} \prob{M'=m, X'=x, Y'=y} \cdot
    \log \frac{\prob{M'=m|X'=x,Y'=y}}{\prob{M'=m|Y'=y}} \nonumber\\
    &\geq \frac{c+1}{\varepsilon'} \cdot
    \prob{\br{M',X',Y'} \notin G_2} - 1  .
    \label{eqn:b3}
\end{align}
Above Eq.~\eqref{eqn:b1} is one of the assumptions in the lemma;
Eq.~\eqref{eqn:b2} follows from the definition of the conditional
mutual information; Eq.~\eqref{eqn:b3} follows from the definition
of $G_2$ and Fact \ref{fact:-1 inequality}. Eq.~\eqref{eqn:b3}
implies that $\prob{\br{M',X',Y'} \notin G_2} \leq \ve'$.

To bound the last term define
\[  G_3 \defeq \set{(m,x,y) :
    \log \frac{\prob{M'=m, Y'=y}}{\prob{N=m, Y'=y}}
    \leq \frac{\ve+1}{\ve'}} .\]
Consider,
\begin{align}
    \varepsilon &\geq \relent{X'Y'M'}{X'Y'N} \nonumber\\
    &\geq \relent{Y'M'}{Y'N} \label{eqn:myc1} \\
    &= \expec{(m,x,y) \leftarrow M',X',Y'}
    {\log \frac{\prob{M'=m, Y'=y}}{\prob{N=m, Y'=y}}} \nonumber \\
    &=\sum_{(m,x,y)\in G_3} \prob{M'=m, X'=x, Y'=y} \cdot
    \log \frac{\prob{M'=m, Y'=y}}{\prob{N=m,Y'=y}} \nonumber \\
    &\qquad {} + \sum_{(m,x,y)\not\in G_3} \prob{M'=m, X'=x, Y'=y} \cdot
    \log \frac{\prob{M'=m, Y'=y}}{\prob{N=m, Y'=y}} \nonumber \\
    &\geq -1 + \prob{\br{M',X',Y'} \notin G_3} \cdot \frac{\ve+1}{\ve'} \label{eqn:myc2} .
\end{align}
Above Eq.~\eqref{eqn:myc1} follows from Fact~\ref{fact:subsystem
monotone} and Eq.~\eqref{eqn:myc2} follows from definition of $G_3$.
This implies  $\prob{\br{M',X',Y'} \notin G_3} \leq \varepsilon'$.

On combining the bounds for the three terms, using
Eq.~\eqref{eqn:eq6} and using the union bound we get (recall $1 > \ve >0$)
\[  \pr{(m,x,y)\leftarrow M',X',Y'}
    {\log \frac{\prob{N=m|X'=x}}{\prob{N=m|Y'=y}}
    \geq \frac{c+5}{\ve'}}
    \leq 3 \ve'.\]
Now using $\onenorm{X'Y'M' - X'Y'N} \leq \sqrt{\ve} $ (as was shown previously), we finally have,
\[  \pr{(m,x,y)\leftarrow N,X',Y'}
    {\log \frac{\prob{N=m|X'=x}}{\prob{N=m|Y'=y}}
    \geq \frac{c+5}{\ve'}}
    \leq 3 \ve'+\sqrt{\ve}. \qedhere\]
\end{proof}

We will need the following further generalization of the previous
lemma.
\begin{lemma}\label{lem:simulationsimple}
Let $t \geq 1$ be an integer. Let $\ve'>0$, $c_s \geq 0, 1> \ve_s>0$
for each $1\leq s\leq t$. Let $R',X',Y',M_1',\ldots,M_t'$, be random
variables for which the following holds (below $M'_{<s} \defeq M'_1
\cdots M'_{s-1}$),
\[\condmutinf{X'}{M'_s}{Y'R'M'_{<s}}\leq c_s,  \quad \condmutinf{Y'}{M'_s}{X'R'M'_{<s}}\leq\ve_s, \quad \mbox{for odd $s$ }\]
and
\[\condmutinf{Y'}{M'_s}{X'R'M'_{<s}}\leq c_s, \quad \condmutinf{X'}{M'_s}{Y'R'M'_{<s}}\leq\ve_s, \quad \mbox{for even  $s$.}\]
There exists a public-coin $t$-message protocol $\P_t$ between
Alice, with input $X'R'$, and Bob, with input $Y'R'$, with Alice
sending the first message. The total communication is
$$\frac{\sum_{s=1}^tc_s+5t}{\ve'}+\O\(t\log\frac{1}{\ve'}\),$$
and at end of the protocol, both Alice and Bob possess random
variables $M_1, \ldots, M_t$, satisfying
\[\|R'X'Y'M_1\cdots M_t-R'X'Y'M_1'\cdots M_t'\|_1\leq 3\sum_{s=1}^t\sqrt{\ve_s}+6\ve't.\]
\end{lemma}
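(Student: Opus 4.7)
The plan is to prove the lemma by induction on $t$, simulating one message at a time using Lemma~\ref{lem:generalized braverman rao}. Since $R'$ appears in both Alice's input $X'R'$ and Bob's input $Y'R'$, it is shared information throughout; once messages $M_1,\ldots,M_{s-1}$ have been produced by the protocol, they too are held by both parties. For the $s$-th round I would therefore treat the pair $\br{(X',R',M_{<s}),(Y',R',M_{<s})}$ as the effective inputs and invoke Lemma~\ref{lem:generalized braverman rao} to simulate $M'_s$, with the roles of Alice and Bob swapped for even $s$.

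To verify the hypotheses of Lemma~\ref{lem:generalized braverman rao} in the ideal setting, i.e., when the state before round $s$ is $R'X'Y'M'_{<s}$, observe that the conditional mutual information of $M'_s$ with Alice's effective input $(X',R',M'_{<s})$ given Bob's effective input $(Y',R',M'_{<s})$ reduces (by the chain rule, since $R'$ and $M'_{<s}$ appear on both sides of the conditioning) to $\condmutinf{X'}{M'_s}{Y'R'M'_{<s}}\leq c_s$; the other direction is symmetric. For odd $s$, Lemma~\ref{lem:generalized braverman rao} then applies with parameters $c=c_s$ and $\ve=\ve_s$, producing a one-message protocol of cost $\tfrac{c_s+5}{\ve'}+\O(\log(1/\ve'))$ whose output $M_s^{\mathrm{id}}$ satisfies $\onenorm{R'X'Y'M'_{<s}M'_s-R'X'Y'M'_{<s}M_s^{\mathrm{id}}}\leq 3\sqrt{\ve_s}+6\ve'$; the analogous argument handles even $s$ with Alice and Bob swapped.

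The main subtlety, and the step I expect to require the most care, is that in the actual simulation the state before round $s$ is $R'X'Y'M_{<s}$ rather than the ideal $R'X'Y'M'_{<s}$, so the per-round error from the previous paragraph must be combined with the accumulated drift. The compression procedure of Lemma~\ref{lem:generalized braverman rao} is a deterministic function of each party's input together with independent public randomness, so by Fact~\ref{fact:l1 monotone} the $\ell_1$-distance between the two possible joint states before round $s$ is preserved when this compression function is applied. Combining this preservation with the triangle inequality yields the recursion
\[\onenorm{R'X'Y'M_{\leq s}-R'X'Y'M'_{\leq s}}\leq \onenorm{R'X'Y'M_{<s}-R'X'Y'M'_{<s}}+3\sqrt{\ve_s}+6\ve',\]
and iterating from $s=1$ to $t$ gives the stated error bound $3\sum_{s=1}^t\sqrt{\ve_s}+6\ve't$. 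Summing the per-round costs gives the total communication $\tfrac{\sum_{s=1}^t c_s+5t}{\ve'}+\O(t\log(1/\ve'))$.
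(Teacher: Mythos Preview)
Your proposal is correct and follows essentially the same approach as the paper: induction on $t$, with each round simulated via Lemma~\ref{lem:generalized braverman rao} applied to the effective inputs $(X'R'M'_{<s},Y'R'M'_{<s})$, and the drift between the actual and ideal pre-round states controlled using Fact~\ref{fact:l1 monotone} together with the triangle inequality. The paper phrases the argument as an explicit inductive step (producing $\P_{t-1}$ first and then appending one more round), while you phrase it as an iteration from $s=1$ to $t$ with the recursion on the $\ell_1$-error; these are the same proof.
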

\begin{proof}
We prove the lemma by induction on $t$.  For the base  case $t=1$,
note that
$$\condmutinf{X'R'}{M_1'}{Y'R'}=\condmutinf{X'}{M_1'}{Y'R'} \leq c_1$$
 and
$$ \condmutinf{Y'R'}{M_1'}{X'R'}=\condmutinf{Y'}{M_1'}{X'R'} \leq \ve_1. $$
Lemma \ref{lem:generalized braverman rao}  implies (by taking $X',
Y, 'M'$ in Lemma \ref{lem:generalized braverman rao} to be $X'R',
Y'R', M_1'$ respectively) that Alice, with input $X'R'$, and Bob,
with input  $Y'R'$,  can run a public-coin protocol with a single
message from Alice to Bob of
$$\frac{c_1+5}{\ve'}+\O(\log\frac{1}{\ve'})$$
bits and generate a new random variable $M_1$ satisfying
$$\|R'X'Y'M'_1-R'X'Y'M_1\|_1\leq3\sqrt{\ve_1}+6\ve'.$$
Now let  $t > 1$. Assume $t$ is odd, for even $t$  a similar argument
will follow. From the induction hypothesis there exists a public-coin
$t-1$ message protocol $\mathcal{P}_{t-1}$ between Alice, with input
$X'R'$, and Bob, with input $Y'R'$, with Alice sending the first
message, and total communication
\begin{equation} \label{eqn:commt-1}
\frac{\sum_{s=1}^{t-1}c_s+5(t-1)}{\ve'}+\O\((t-1)\log\frac{1}{\ve'}\),
\end{equation}
such that at the end Alice and Bob both possess random variables
$M_1, \ldots, M_{t-1}$ satisfying
\begin{equation} \label{eqn:errort-1}
\|R'X'Y'M_1\cdots M_{t-1}-R'X'Y'M_1'\cdots M_{t-1}'\|_1\leq
3\sum_{s=1}^{t-1}\sqrt{\ve_s}+6\ve'(t-1).
\end{equation}
Note that
$$\condmutinf{Y'R'M_{<t}'}{M_t'}{X'R'M_{<t}'}=\condmutinf{Y'}{M_t'}{X'R'M_{<t}'} \leq c_t$$
and
$$\condmutinf{X'R'M_{<t}'}{M_t'}{Y'R'M_{<t}'}=\condmutinf{X'}{M_t'}{Y'R'M_{<t}'} \leq \ve_t. $$
Therefore Lemma \ref{lem:generalized braverman rao}  implies (by
taking $X', Y, 'M'$ in Lemma \ref{lem:generalized braverman rao} to
be $X'R'M_{<t}', Y'R'M_{<t}', M_t'$ respectively) that Alice, with
input $X'R'M_{<t}'$, and Bob, with input $Y'R'M_{<t}'$,  can run a
public coin protocol $\mathcal{P}$ with a single message from Alice
to Bob of
\begin{equation}
\label{eqn:commt}
\frac{c_t+5}{\ve'}+\O\(\log\frac{1}{\ve'}\)
\end{equation}
bits and generate a new random variable $M''_t$ satisfying
\begin{equation}\label{eqn:eq9}
\onenorm{R'X'Y'M_1'\cdots M_{t-1}'M_t'-R'X'Y'M_1'\cdots M_{t-1}'M''_t}\leq
3\sqrt{\ve_t}+6\ve'.
\end{equation}
Fact~\ref{fact:l1 monotone} and Eq.~\eqref{eqn:errort-1} imply that
Alice, on input $X'R'M_{<t}$ and Bob on input $Y'R'M_{<t}$, on
running the same protocol $\mathcal{P}$ will generate a new random
variable $M_t$ satisfying
\begin{align}
& \|R'X'Y'M_1\cdots M_{t-1}M_t-R'X'Y'M_1'\cdots M_{t-1}'M_t''\|_1 \nonumber \\
& = \|R'X'Y'M_1\cdots M_{t-1}-R'X'Y'M_1'\cdots M_{t-1}'\|_1 \nonumber \\
& \leq  3\sum_{s=1}^{t-1}\sqrt{\ve_s}+6\ve'(t-1). \label{eqn:myeq9}
\end{align}
Therefore by composing protocol $\mathcal{P}_{t-1}$ and protocol
$\mathcal{P}$ and using Equations~\eqref{eqn:commt-1},~\eqref{eqn:commt}, \eqref{eqn:eq9}, \eqref{eqn:myeq9} we get a
public-coin $t$-message protocol $\mathcal{P}_t$  between Alice,
with input $X'R'$, and Bob, with input $Y'R'$, with Alice sending
the first message, and total communication
\begin{equation*}
\frac{\sum_{s=1}^{t}c_s+5t}{\ve'}+\O\(t \log\frac{1}{\ve'}\),
\end{equation*}
such that at the end Alice and Bob both possess random variables
$M_1, \ldots, M_{t}$ satisfying
\begin{equation*}
\|R'X'Y'M_1\cdots M_{t}-R'X'Y'M_1'\cdots M_{t}'\|_1\leq
3\sum_{s=1}^{t}\sqrt{\ve_s}+6\ve' t. \qedhere
\end{equation*} 
\end{proof}
Following lemma, obtained from the lemma above, is the one that we
will finally use in the proof of our main result.
\begin{lemma}\label{lem:simulation}
Let random variables $R',X',Y',M_1',\ldots,M_t'$  and  numbers
$\ve', c_s,\ve_s$ satisfy all the conditions in Lemma
\ref{lem:simulationsimple}.  Let $\tau >0$ and  let random variables
$(X,Y)$ be $(1-\tau)$-embeddable in $(X'R',Y'R')$. There exists a
public-coin $t$-message protocol $\Q_t$ between Alice, with input
$X$, and Bob, with input $Y$, with Alice sending the first message,
and total communication
$$\frac{\sum_{s=1}^tc_s+5t}{\ve'}+\O\(t\log\frac{1}{\ve'}\)$$
bits, such that at the end Alice possesses $R_A M_1\cdots M_t$ and
Bob possesses $R_BM_1\cdots M_t$, such that
\[\|XYR_AR_BM_1\cdots M_t-X'Y'R'R'M_1'\cdots M_t'\|_1\leq\tau+3\sum_{s=1}^t\sqrt{\ve_s}+6\ve't.\]
\end{lemma}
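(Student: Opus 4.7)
The plan is to prepend a correlated-sampling preprocessing step to the protocol $\P_t$ produced by Lemma~\ref{lem:simulationsimple}, and then invoke the monotonicity of the $\ell_1$ distance under deterministic processing to absorb the embedding error. By Definition~\ref{def:embeddable}, since $(X,Y)$ is $(1-\tau)$-embeddable in $(X'R', Y'R')$, there exist a distribution $R$ on some set $\R$ independent of $XY$, and functions $f_A$, $f_B$, satisfying
\[\onenorm{XY\, f_A(X,R)\, f_B(Y,R) - X'R'Y'R'} \leq \tau.\]
The first step of $\Q_t$ is to sample $R$ using the public coin (no communication); Alice then locally computes $R_A \defeq f_A(X,R)$ and Bob locally computes $R_B \defeq f_B(Y,R)$. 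After this step, Alice holds $XR_A$ and Bob holds $YR_B$, and the joint distribution $XYR_AR_B$ is $\tau$-close in $\ell_1$ to $X'Y'R'R'$.

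Next, Alice and Bob execute the protocol $\P_t$ from Lemma~\ref{lem:simulationsimple} on the pair of inputs $(XR_A,\,YR_B)$, treating them formally as $(X'R',\,Y'R')$. The entire transcript $M_1 \cdots M_t$ is a deterministic function of these two inputs together with the additional public coins used inside $\P_t$ (which are independent of $XYR_AR_B$, and also independent of $X'Y'R'R'$). Applying Fact~\ref{fact:l1 monotone}, with the public coins of $\P_t$ playing the role of the auxiliary randomness, the $\ell_1$ distance between the full joint distribution produced by running $\P_t$ on $(XR_A,YR_B)$ and the one produced by running it on the ideal $(X'R',Y'R')$ is exactly $\onenorm{XYR_AR_B - X'Y'R'R'} \leq \tau$. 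By the guarantee of Lemma~\ref{lem:simulationsimple}, the latter joint distribution is in turn within $3\sum_{s=1}^t\sqrt{\ve_s}+6\ve't$ of $X'Y'R'R'M_1'\cdots M_t'$. The triangle inequality then yields
\[\onenorm{XYR_AR_BM_1\cdots M_t - X'Y'R'R'M_1'\cdots M_t'} \leq \tau + 3\sum_{s=1}^t\sqrt{\ve_s}+6\ve't,\]
while the communication cost of $\Q_t$ equals that of $\P_t$ since the embedding step uses only public coins.

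The argument is essentially bookkeeping; the only subtle point is the invocation of Fact~\ref{fact:l1 monotone}. One must view the entirety of $\P_t$ (both the exchanged messages and whatever final records Alice and Bob retain) as a single deterministic map applied to the input pair together with an auxiliary randomness source independent of the input pair. Once $\P_t$ is cast in this form, monotonicity transports the $\tau$-bound on the inputs to a $\tau$-bound on the full transcripts, and the remainder of the proof is a triangle inequality.
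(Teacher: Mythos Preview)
Your proof is correct and follows essentially the same approach as the paper's own proof: use the embedding to produce $R_A,R_B$ with public coins, run $\P_t$ on $(XR_A,YR_B)$, invoke Fact~\ref{fact:l1 monotone} to transport the $\tau$-closeness through the protocol, and finish with the triangle inequality against the guarantee of Lemma~\ref{lem:simulationsimple}. The only cosmetic slip is the ordering $X'R'Y'R'$ in your displayed embedding inequality; per Definition~\ref{def:embeddable} it should read $X'Y'R'R'$, but this is immaterial for the $\ell_1$ bound.
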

\begin{proof}
In $\Q_t$, Alice and Bob, using public coins and no communication
first generate $R_A,R_B$ such that $\onenorm{XYR_AR_B-X'Y'R'R'}\leq\tau$.
They can do this from the Definition \ref{def:embeddable} of
embedding.  Now they will run protocol $\P_t$ (as in  Lemma
\ref{lem:simulationsimple}) with Alice's input being $XR_A$ and
Bob's input being $YR_B$ and at the end both possess  $M_1,\ldots,
M_t$. From  Lemma~\ref{lem:simulationsimple}, the communication of
$\Q_t$ is as desired. Now from Fact~\ref{fact:l1 monotone} and
Lemma~\ref{lem:simulationsimple}
\[\|XYR_AR_BM_1\cdots M_t-X'Y'R'R'M_1'\cdots M_t'\|_1\leq\tau+3\sum_{s=1}^t\sqrt{\ve_s}+6\ve't. \qedhere\] 
\end{proof}
We are now ready to prove our main result, Theorem~\ref{thm:direct
product}. We restate it here for convenience.
\begin{maintheorem}
    Let  $\X$, $\Y$, $\Z$ be finite sets,
    $f\subseteq\X\times\Y\times\Z$ a relation, $\ve > 0$ and $k, t \geq1$ be integers. There exists a constant $\kappa$ such that,
    \[ \mathrm{R}^{(t),\mathrm{pub}}_{1-(1-\ve/2)^{\Omega(k\ve^2/t^2)}}(f^k) = \Omega \( \frac{\ve \cdot k}{t} \cdot \(\mathrm{R}^{(t),\mathrm{pub}}_{\ve}(f) - \frac{\kappa t^2}{\ve^2} \) \).\]
\end{maintheorem}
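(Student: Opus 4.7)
My plan is to use Yao's principle (Lemma~\ref{lem:yaos principle}) to reduce to the distributional setting: fix a distribution $\mu$ on $\X\times\Y$ attaining $c \defeq \mathrm{R}^{(t),\mathrm{pub}}_{\ve}(f) = \mathrm{D}_{\ve}^{(t),\mu}(f)$. Suppose, for contradiction, that there is a $t$-message deterministic protocol $\Pi$ computing $f^k$ on inputs $(X,Y)\sim\mu^k$ with total communication $T = o\bigl(\ve\cdot k\cdot(c - \kappa t^2/\ve^2)/t\bigr)$ whose joint success probability exceeds $(1-\ve/2)^{L}$ for some $L=\Omega(k\ve^2/t^2)$. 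The argument builds a sequence of ``hard'' coordinates $j_1,\ldots,j_L$ greedily: after committing to a set $C_l$ of $l$ such coordinates, let $\alpha_l \defeq \Pr[\bigwedge_{i\in C_l} T_i=1]$ where $T_i$ is the success indicator in coordinate $i$. Either $\alpha_l\le (1-\ve/2)^L$ and we are done, or we produce $j_{l+1}\notin C_l$ such that $\Pr[T_{j_{l+1}}=1\mid \bigwedge_{i\in C_l} T_i=1]\le 1-\ve/2$. Iterating $L$ times multiplies the successes down to the desired bound.

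\textbf{Splitting and information accounting.} Since $\mu$ need not be product, I introduce per-coordinate splitting random variables $D_i\in\{A,B\}$ uniform: if $D_i=A$ reveal $W_i=X_i$ to both parties, otherwise $W_i=Y_i$. Conditioned on $(D,W)$ the joint distribution becomes product across Alice and Bob on each coordinate, which is what makes embeddability possible. Let $E_l$ denote the event $\bigwedge_{i\in C_l} T_i=1$, and prime all variables to denote conditioning on $E_l$. By the chain rule and the fact that in an Alice round $s$ the message $M_s$ is a function of $(X,M_{<s})$, I can bound
\[ \sum_{j\notin C_l}\sum_{s=1}^{t}\condmutinf{X'_j}{M'_s}{Y'\,D\,W\,M'_{<s}\,Z_{C_l}} \;\le\; T + \log(1/\alpha_l) + \O(l\log|\Z|),\]
with a symmetric inequality for Bob's rounds and $\mathrm{I}(Y'_j;\cdot|X'\cdots)$. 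The penalties $\log(1/\alpha_l)+\O(l\log|\Z|)$ come from conditioning on $E_l$ and on the outputs on $C_l$. Similarly, $\expec{j}{\relent{X'_jY'_jD'_jW'_j}{\mu D W}} = \O((l\log|\Z|+\log(1/\alpha_l))/k)$. Averaging over $j\notin C_l$, there exists a coordinate whose marginal is close to $\mu$ in relative entropy and whose total $\sum_s c_s, \sum_s \ve_s$ (informational quantities to be fed into Lemma~\ref{lem:simulationsimple}) are $\O((T+\log(1/\alpha_l))/k)$.

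\textbf{Simulation and contradiction.} Using Lemma~\ref{lem:relent embeddable} with the two relative-entropy bounds above, the original distribution $\mu$ on $(X_j,Y_j)$ is $(1-\tau)$-embeddable in the conditional distribution of $(X'_j R', Y'_j R')$, where $R'$ encodes $(D,W,\text{public randomness})$, and $\tau = O(\sqrt{\text{average rel.\ entropy}})$. Then Lemma~\ref{lem:simulation} produces, with the choice $\ve'=\Theta(\ve/t)$ and $\ve_s=\Theta(\ve^2/t^2)$, a $t$-message public-coin protocol for $f$ on inputs drawn from $\mu$ whose transcript closely mimics the conditional transcript of $\Pi$ at coordinate $j$. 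Bob's simulated output on coordinate $j$ succeeds with probability at least (conditional success on $j$) $-\O(\tau + \sum_s\sqrt{\ve_s}+\ve' t) \ge$ (conditional success) $-\ve/2$. The communication is
\[\frac{\sum_s c_s + 5t}{\ve'} + \O\!\left(t\log\frac{1}{\ve'}\right) = \O\!\left(\frac{t(T+\log(1/\alpha_l))}{\ve k}\right) + \O\!\left(\frac{t^2}{\ve}\log\frac{t}{\ve}\right).\]
If the conditional success on $j$ exceeded $1-\ve/2$, we would obtain a $t$-message protocol for $f$ on $\mu$ with error $<\ve$ and communication strictly less than $c$, contradicting the definition of $c$. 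This forces the required drop $\le 1-\ve/2$ and completes the inductive step, provided the bracketed expression is $o(c)$. Requiring $\log(1/\alpha_l) = \O(k\ve^2/t^2)\cdot\log(1/(1-\ve/2))$ throughout the iteration is exactly what yields the exponent $k\ve^2/t^2$ in the theorem.

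\textbf{Main obstacle.} The delicate point is the tight bookkeeping of conditioning. Each conditioning on $E_l$ can inflate the mutual informations by $\log(1/\alpha_l)$, so the iteration can run only while $\log(1/\alpha_l)$ stays comparable to $T/k$; additionally, the $t$-round simulation accumulates error linearly, forcing $\ve_s \le \ve^2/t^2$ per round, which is exactly what drives the $\ve^2/t^2$ factor in the exponent. Managing these two competing constraints simultaneously (the per-round simulation error budget and the per-step conditioning penalty) across all $L$ iterations is the heart of the proof.
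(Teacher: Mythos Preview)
Your outline is the paper's argument: Yao's principle, iterated conditioning on success over a growing set $C$ of coordinates, the $D/W$ splitting to handle non-product $\mu$, embedding via Lemma~\ref{lem:relent embeddable}, and round-by-round simulation via Lemma~\ref{lem:simulation}. Two execution points need correcting.

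First, drop the conditioning on the outputs $Z_{C_l}$. It injects the $\O(l\log|\Z|)$ penalty, and the theorem places no bound on $|\Z|$; for relations with many valid answers this term can dominate the budget and kill the contradiction. The paper conditions only on the success event $E_l$: from $\Pr[E_l]\ge 2^{-\delta k}$ one gets $\rminent{(DUXYM_{\le s})'}{DUXYM_{\le s}}\le\delta k$ for every $s$, and this single min-entropy bound powers all the relative-entropy and mutual-information estimates with no $|\Z|$ dependence.

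Second, your displayed inequality (and its ``symmetric'' partner) is the $c_s$-side estimate, driven by $|M|\le T$; the $\varepsilon_s$-side estimate (for odd $s$, the quantity $\condmutinf{Y'_j}{M'_s}{X'_jR'_jM'_{<s}}$) does not fall out of the same calculation. In the paper it is a \emph{separate} application of the min-entropy bound to $D^1U^1X^1Y^1M^1_{\le s}$, followed by the chain rule, and then the Markov relation $Y_i\leftrightarrow X_iR_iM_{<s}\leftrightarrow M_s$ (valid for odd $s$ because $M_s$ is a function of $(X,M_{<s})$) to reach $\sum_{i\notin C}\condmutinf{Y^1_i}{M^1_s}{X^1_iR^1_iM^1_{<s}}\le 2\delta k$ for each odd $s$; summing over rounds gives $\sum_s\varepsilon_s\le O(\delta t)$ per good coordinate (note the extra factor of $t$ compared to your $\O(\log(1/\alpha_l)/k)$), and then $\sum_s\sqrt{\varepsilon_s}\le O(\sqrt{\delta}\, t)=O(\varepsilon)$. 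Relatedly, the paper takes $R_j=D_{-j}U_{-j}X_{C\cup[j-1]}Y_{C\cup[j-1]}$; the earlier full inputs are exactly what the chain rule hands you and are needed both for this Markov step and for verifying the hypotheses of Lemma~\ref{lem:relent embeddable}.
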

\begin{proofof}{Theorem~\ref{thm:direct product}}
 Let $c \defeq \mathrm{D}^{(t), \mu}_{\ve}(f) - \frac{\kappa t^2}{\ve^2} $ for $\kappa$ to be chosen later. Let $\delta \defeq \frac{\ve^2}{7500t^2}$ and $\delta_1 = \frac{\ve}{3000 t}$. From Yao's principle,
Lemma~\ref{lem:yaos principle}, it suffices to prove that for any
distribution $\mu$ on $\X\times \Y$,
$$\mathrm{D}^{(t),\mu^k}_{1-(1-\ve/2)^{\lfloor \delta k \rfloor}}(f^k) \geq \delta_1 k  c \enspace .$$
Let $XY\sim\mu^k$. Let $\Q$ be a $t$-message deterministic protocol
between Alice, with input $X$, and Bob, with input $Y$, that
computes $f^k$, with Alice sending the first message and total
communication $\delta_1 k c$ bits. We
assume $t$ is odd for the rest of the argument and Bob makes the
final output (the case when $t$ is even follows similarly). The
following Claim \ref{claim:next coordinate identification} implies
that the success of $\Q$ is at most $(1-\ve/2)^{\lfloor \delta k
\rfloor}$ and this shows the desired.
\end{proofof}
\newcommand {\Tsr} {\ensuremath{ T^{\br{r}} }}
\begin{claim}\label{claim:next coordinate identification}
    For each $i\in[k]$, define a binary random variable $T_i\in\{0,1\}$, which
represents the success of $\Q$ (that is Bob's output being correct)
on the $i$-th instance. That is, $T_i=1$ if the $\Q$ computes the
$i$-th instance of $f$ correctly, and $T_i=0$ otherwise. Let $k'
\defeq \lfloor\delta k\rfloor $.
    There exist $k'$ coordinates \set{i_1, \ldots, i_{k'}} such that
    for each $1 \leq r \leq k'-1$, either
    \[ \prob{\Tsr = 1} \leq(1-\ve/2)^{k'} \]
    or
    \[ \prob{ T_{i_{r+1}}=1 \middle\vert \Tsr = 1 }
    \leq1-\ve/2 ,\]
    where $ \displaystyle \Tsr \defeq \prod_{j=1}^{r} T_{i_j} $.
\end{claim}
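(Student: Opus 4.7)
I will construct the sequence $i_1,\ldots,i_{k'}$ inductively on $r$. Suppose $i_1,\ldots,i_r$ are already chosen with $L=\{i_1,\ldots,i_r\}$; write $E\defeq\{T^{(r)}=1\}$ and $p\defeq\prob{E}$. If $p\leq(1-\ve/2)^{k'}$, the first alternative is already met and any remaining coordinate serves as $i_{r+1}$; so assume $p>(1-\ve/2)^{k'}$, which gives $\log(1/p)\leq k'\log(1/(1-\ve/2))=\O(k'\ve)=\O(\delta k\ve)$. Let $(X,Y)\sim\mu^k$, $M=M_1\cdots M_t$ be the transcript of $\Q$, and $(X',Y',M')$ the joint distribution conditioned on $E$. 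The controlling identity $\relent{X'Y'M'}{XYM}=\log(1/p)\leq\O(\delta k\ve)$ follows directly from conditioning on an event of mass $p$.

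The plan is to locate a coordinate $j\in[k]\setminus L$ that allows $\Q$ to be ``pulled back'' to a single-instance $t$-message protocol $\Q'$ for $f$ on input $(X_j,Y_j)\sim\mu$. Both parties will use public coins, together with the embedding produced by Lemma~\ref{lem:relent embeddable}, to jointly sample side information $R^j$ (a suitable subcollection of the other coordinates' inputs). Since $\mu^k$ is product across coordinates, chain-rule decomposition of the controlling identity gives $\sum_j\mathbb{E}[\relent{X'_jY'_j\mid X'_{<j}Y'_{<j}}{X_jY_j}]\leq\O(\delta k\ve)$, so averaging over $j\in[k]\setminus L$ (of size at least $(1-\delta)k$) produces, for a typical $j$, an $\O(\delta\ve)$-closeness in divergence of the conditional marginal at $j$ to $\mu$. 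A parallel chain-rule decomposition of $\condmutinf{X'}{M'}{Y'}+\condmutinf{Y'}{M'}{X'}\leq 2|M|+\O(\log(1/p))=\O(\delta_1 kc+\delta k\ve)$ distributes the transcript's information across coordinates and messages. A Markov/averaging argument then yields a single $j\in[k]\setminus L$ for which (i) the marginal $X'_jY'_j$ is $\O(\delta\ve)$-close in divergence to $\mu$, (ii) for each $s$ the quantities $c_s^j\defeq\condmutinf{X'_j}{M'_s}{Y'_jR^jM'_{<s}}$ and $\ve_s^j\defeq\condmutinf{Y'_j}{M'_s}{X'_jR^jM'_{<s}}$ (with the roles of $X'$ and $Y'$ swapped for even $s$) satisfy $\sum_sc_s^j=\O(\delta_1 c)$ and $\sum_s\ve_s^j=\O(\delta\ve)$, and (iii) $\prob{T_j=1\,|\,E}$ is at most the conditional average success over $[k]\setminus L$.

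Feeding this $j$ into Lemma~\ref{lem:relent embeddable} shows $(X_j,Y_j)\sim\mu$ is $(1-\O(\sqrt{\delta\ve}))$-embeddable in $(X'_jR^j,Y'_jR^j)$; Lemma~\ref{lem:simulation} then produces a public-coin $t$-message protocol $\Q'$ for $f$ on input $(X_j,Y_j)$ whose final transcript is $\O(\sqrt{\delta\ve}+\sum_s\sqrt{\ve_s^j}+\ve't)$-close in $\ell_1$ to the $j$-th coordinate's transcript of $\Q$ under $E$. Choosing $\ve'=\Theta(\ve/t)$ and using the prescribed $\delta=\Theta(\ve^2/t^2)$, $\delta_1=\Theta(\ve/t)$ (and Cauchy--Schwarz: $\sum_s\sqrt{\ve_s^j}\leq\sqrt{t\sum_s\ve_s^j}=\O(\ve)$), the simulation error is at most $\ve/2$ and the communication is $\frac{\sum_sc_s^j+5t}{\ve'}+\O(t\log(1/\ve'))\leq c+\kappa t^2/\ve^2-1$ for a sufficiently large constant $\kappa$; in particular it is strictly less than $\mathrm{D}^{(t),\mu}_\ve(f)=c+\kappa t^2/\ve^2$. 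If $\prob{T_j=1\,|\,E}>1-\ve/2$, Bob's output from $\Q'$ would solve $f$ on $(X_j,Y_j)\sim\mu$ with total error strictly less than $\ve/2+\ve/2=\ve$, contradicting the definition of $\mathrm{D}^{(t),\mu}_\ve(f)$. Hence $\prob{T_j=1\,|\,E}\leq 1-\ve/2$, and we set $i_{r+1}=j$.

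The main obstacle will be the information-theoretic bookkeeping: conditioning on the event $E$, a highly nonlinear function of $X,Y$ and the protocol, does not commute cleanly with chain-rule decompositions of either relative entropy or mutual information across coordinates and messages, and the $\O(\log(1/p))=\O(\delta k\ve)$ overhead must be tracked at each stage so that the averaged per-coordinate bounds remain strong enough to feed into Lemma~\ref{lem:simulation}. The key simplification is that the product structure of $\mu^k$ allows essentially all the accounting to be funnelled through the single inequality $\relent{X'Y'M'}{XYM}\leq\log(1/p)$, with no extra machinery needed to decompose $\mu$ itself into a convex combination of product distributions.
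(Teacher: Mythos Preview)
Your overall architecture matches the paper's: induct on $r$, condition on the event $E=\{T^{(r)}=1\}$, locate a good coordinate $j$ by averaging, embed $(X_j,Y_j)\sim\mu$ into the conditioned picture via Lemma~\ref{lem:relent embeddable}, then simulate the transcript message-by-message via Lemma~\ref{lem:simulation} and derive a contradiction to $\mathrm{D}^{(t),\mu}_\ve(f)$.  The gap is in the one place you gloss over: the definition of the side information $R^j$ and, relatedly, your final claim that ``no extra machinery [is] needed to decompose $\mu$ itself into a convex combination of product distributions.''  That machinery is exactly what makes the argument go through.

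Concretely, you need a \emph{single} random variable $R^j$ for which all of the following hold simultaneously: (a) the two embedding hypotheses of Lemma~\ref{lem:relent embeddable}; (b) for odd $s$, $\condmutinf{Y'_j}{M'_s}{X'_jR^jM'_{<s}}$ is small (this is your $\ve_s^j$); (c) for even $s$, $\condmutinf{X'_j}{M'_s}{Y'_jR^jM'_{<s}}$ is small; and (d) the $c_s^j$'s sum to $\O(\delta_1 c)$ via a chain rule over $j$.  No deterministic ``subcollection of the other coordinates' inputs'' achieves this when $\mu$ is non-product across $\X\times\Y$.  For instance, $R^j=X'_{<j}Y'_{<j}$ fails (b): given $X_{\leq j},Y_{<j},M_{<s}$ the rectangle constraint on $Y_{\geq j}$ couples $Y_j$ to $Y_{>j}$, which is correlated with $X_{>j}$ and hence with Alice's next message $M_s$.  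Taking $R^j=X'_{-j}Y'_{-j}$ does give (b) and (c), but then the $c_s^j$'s do not telescope over $j$ and can be as large as $k\cdot|M_s|$.  The paper resolves this by introducing independent bits $D\in\{0,1\}^k$ and setting $U_i=X_i$ or $Y_i$ according to $D_i$, then defining $R_j=D_{-j}U_{-j}X_{C\cup[j-1]}Y_{C\cup[j-1]}$.  This randomized choice makes the ``free'' $X$-coordinates and ``free'' $Y$-coordinates \emph{disjoint}, so the unconditioned Markov chains $Y_j\leftrightarrow X_jR_jM_{<s}\leftrightarrow M_s$ (odd $s$) and the symmetric one (even $s$) both hold exactly; conditioning on $E$ costs only $\log(1/p)$ per message, and averaging over the coin $D_j$ recovers both halves of (a) and splits the chain-ruled mutual information symmetrically for (d).  This is precisely the ``convex combination of product distributions'' step you assert is unnecessary.

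A minor point: even with the correct $R^j$, your claimed bound $\sum_s\ve_s^j=\O(\delta\ve)$ is off by a factor of $t$ (each message $s$ incurs a separate $\log(1/p)$ cost, so one gets $\O(\delta t)$ after averaging, as in the paper's Eq.~(25)).  This does not affect the endgame once you apply Cauchy--Schwarz, but it is worth tracking correctly.
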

\begin{proofof}{Claim~\ref{claim:next coordinate identification}}
For $s \in [t]$, denote the $s$-th message of $\Q$ by $M_s$. Define $M\defeq M_1\cdots M_t$.
% and $c_s\defeq|M_s|/(\delta_1 k)$ (where $|M_s|$ is the number of bits in $M_s$). Note that $\sum_{s=1}^tc_s=c$. 
In the following we assume $1
\leq r < k'$, however same arguments also work when $r=0$, that is
for identifying the first coordinate, which we skip for the sake of
avoiding repetition. Suppose we have already identified $r$
coordinates $i_1,\ldots,i_r$ satisfying that
$\Pr[T_{i_1}=1]\leq1-\ve/2$ and
$\Pr[T_{i_{j+1}}=1|T^{(j)}=1]\leq1-\ve/2$ for $1\leq j\leq r-1$. If
$\prob{\Tsr = 1} \leq (1-\ve/2)^{k'}$, we are done. So from now on,
assume $\prob{\Tsr=1} > (1-\ve/2)^{k'} \geq 2^{-\delta k}$.

Let $D$ be a random variable uniformly distributed in $\{0,1\}^k$
and independent of $XY$. Let $U_i=X_i$ if $D_i=0$, and $U_i=Y_i$ if
$D_i=1$. For any random variable $L$, let us introduce the notation:
$L^1 \defeq (L|\Tsr=1)$. For example, $X^1Y^1=(XY|\Tsr=1)$. If
$L=L_1\cdots L_k$, define $L_{-i}\defeq L_1\cdots
L_{i-1}L_{i+1}\cdots L_k$, and $L_{<i}\defeq L_1\cdots L_{i-1}$.
Random variable $L_{\leq i}$ is defined
analogously.  Let $C
\defeq \set{i_1, \ldots, i_r}$. Define $R_i\defeq
D_{-i}U_{-i}X_{C\cup[i-1]}Y_{C\cup [i-1]}$ for $i\in[k]$. We denote
an element from the range of $R_i$ by $r_i$.

To prove the claim, we will show that there exists a coordinate
$j\not\in C$ such that,
\begin{enumerate}

  \item $(X_jY_j)$ can be embedded well in $(X^1_jR^1_j, Y^1_jR^1_j)$.

  \item Random variables $X^1_j, Y^1_j, M^1_1, \ldots, M^1_t$ satisfy the conditions of Lemma~\ref{lem:simulationsimple} with appropriate parameters.

\end{enumerate}
Following is helpful in meeting the first condition.
\begin{align}
    \delta k &> \rminent{X^1Y^1}{XY}
        \label{eqn:eq1} \\
    &\geq \relent{X^1Y^1}{XY} \quad \geq \quad \sum_{i\notin C} \relent{X^1_iY^1_i}{X_i Y_i} ,
    \label{eqn:one coordinate close}
\end{align}
where Eq.~\eqref{eqn:eq1} follows from the assumption that
$\prob{\Tsr=1} > 2^{-\delta k}$, and Eq.~\eqref{eqn:one coordinate
close} is from Fact~\ref{fact:relative entropy splitting}. Also
consider,
\begin{align}
    \delta k &> \rminent{X^1Y^1D^1U^1}{XYDU} \nonumber \\
    &\geq \relent{X^1Y^1D^1U^1}{XYDU} \nonumber \\
    &\geq \expec{(d,u,x_C,y_C)\leftarrow D^1,U^1,X^1_C,Y^1_C}
        {\relent{\br{X^1Y^1}_{d,u,x_C,y_C}}{\br{XY}_{d,u,x_C,y_C}}}
        \label{eqn:eq2} \\
    &= \sum_{i \notin C} \; \expec{(d,u,x_{C\cup[i-1]},y_{C\cup[i-1]})\\
        \leftarrow D^1,U^1,X_{C\cup[i-1]}^1,Y_{C\cup[i-1]}^1}
        {\relent{\br{X_i^1Y_i^1}_{d,u,x_{C\cup[i-1]},y_{C\cup[i-1]}}}
        {\br{X_iY_i}_{d,u,x_{C\cup[i-1]},y_{C\cup[i-1]}}}}
        \label{eqn:eq3} \\
    &=\sum_{i\notin C}
        \expec{(d_i,u_i,r_i) \leftarrow D^1_i,U^1_i,R^1_i}
        { \relent{(X^1_iY^1_i)_{d_i,u_i,r_i}}{(X_iY_i)_{d_i,u_i,r_i}}}
        \label{eqn:eq4}\\
    &= \frac{1}{2} \sum_{i \notin C} \;
        \expec{(r_i,x_i)\leftarrow R^1_i,X^1_i}
        {\relent{\br{Y_i^1}_{r_i, x_i}}
        {\br{Y_i}_{x_i}}} +  \frac{1}{2} \sum_{i \notin C} \;
    \expec{(r_i,y_i)\leftarrow R^1_i,Y^1_i}
    {\relent{\br{X_i^1}_{r_i, y_i}}
    {\br{X_i}_{y_i}}} .
        \label{eqn:junk is independent for Y}
\end{align}
Above, Eq.~\eqref{eqn:eq2} and Eq.~\eqref{eqn:eq3} follow from
Fact~\ref{fact:relative entropy splitting}; Eq.~\eqref{eqn:eq4} is
from the definition of $R_i$. Eq.~\eqref{eqn:junk is independent for
Y} follows since $D^1_i$ is independent of $R^1_i$ and with
probability half $ D^1_i$ is $0$, in which case $U^1_i = X^1_i$  and
with probability half $ D^1_i$ is $1$ in which case $U_i^1 = Y_i^1$.

Following calculations are helpful in meeting the second condition.
\begin{align}
    \delta_1 c k &\geq \abs{M^1} \nonumber \\
    &\geq \condmutinf{X^1Y^1}{M^1}{D^1U^1X^1_CY^1_C} \nonumber \\
    &=\sum_{i\notin C} \condmutinf{X^1_iY^1_i}{M^1}
    {D^1U^1X^1_{C\cup[i-1]}Y^1_{C\cup[i-1]}}
    \nonumber \\
    &=\sum_{i\notin C}\sum_{s=1}^t\condmutinf{X^1_iY^1_i}{M^1_s}
    {D^1U^1X^1_{C\cup[i-1]}Y^1_{C\cup[i-1]}M^1_{<s}}
    \nonumber \\
    &= \sum_{i\notin
    C}\sum_{s=1}^t\condmutinf{X^1_iY^1_i}{M^1_s}{D^1_iU^1_iR^1_iM^1_{<s}}\nonumber\\
    & = \sum_{i\notin
    C} \br{\sum_{s \text{ odd}}
    \condmutinf{X^1_iY^1_i}{M^1_s}{D^1_iU^1_iR^1_iM^1_{<s}}+\sum_{s\text{ even}}
    \condmutinf{X^1_iY^1_i}{M^1_s}{D^1_iU^1_iR^1_iM^1_{<s}}}\nonumber\\
    & \geq \frac{1}{2}\sum_{i\notin C}\br{\sum_{s\text{ odd}}\condmutinf{X^1_i}{M^1_s}{R^1_iY^1_iM^1_{<s}}+\sum_{s\text{
even}}\condmutinf{Y^1_i}{M^1_s}{R^1_iX^1_iM^1_{<s}}}.
    \label{eqn:mutual inf XY and M}
\end{align}
Above we have used the chain rule for mutual information several times.
Last inequality follows since $D^1_i$ is independent of
$(X^1_iY_i^1R^1_iM^1)$ and with probability half $ D^1_i$ is $0$, in
which case $U^1_i = X^1_i$  and with probability half $ D^1_i$ is
$1$ in which case $U_i^1 = Y_i^1$.

For the following, let $s \in [t]$ be odd.
\begin{align}
    \delta k &\geq \rminent{D^1 U^1 X^1 Y^1 M^1_{\leq s}}
    {DUXYM_{\leq s}} \nonumber \\
    &\geq \relent{D^1 U^1 X^1 Y^1 M^1_{\leq s}}
    {DUXYM_{\leq s}} \nonumber \\
    &\geq \expec{(d,u,x_C,y_C,m_{\leq s})\leftarrow
    D^1,U^1,X^1_C,Y^1_C,M^1_{\leq s}}{\relent{(X^1Y^1)_{d,u,x_C,y_C,m_{\leq s}}}{(XY)_{d,u,x_C,y_C,m_{\leq s}}}} \nonumber\\
    &=\sum_{i\notin C} \expec{(d,u,x_{C\cup[i-1]},y_{C\cup [i-1]},m_{\leq
    s})\\ \leftarrow D^1,U^1,X^1_{C\cup [i-1]},Y^1_{C\cup [i-1]},M^1_{\leq s}}{\relent{(X^1_iY^1_i)_{d,u,x_{C\cup [i-1]},y_{C\cup [i-1]},m_{\leq s}}}{(X_iY_i)_{d,u,x_{C\cup [i-1]},y_{C\cup [i-1]},m_{\leq
    s}}}} \nonumber\\
    &= \sum_{i\notin C}  \expec{(d_i,u_i,r_i,m_{\leq s})\leftarrow D^1_i,U^1_i,R^1_i,M^1_{\leq s}} {\relent{(X^1_iY^1_i)_{d_i,u_i,r_i,m_{\leq s}}}{(X_iY_i)_{d_i,u_i,r_i,m_{\leq
    s}}}}\label{eqn:c1}\\
    & \geq \frac{1}{2}\sum_{i\notin C}  \expec{(x_i,r_i,m_{\leq s})\leftarrow X^1_i,R^1_i,M^1_{\leq s}}
    {\relent{(Y^1_i)_{x_i,r_i,m_{\leq s}}}{(Y_i)_{x_i,r_i,m_{\leq
    s}}}}\nonumber\\
    &= \frac{1}{2}\sum_{i\notin C}  \expec{(x_i,r_i,m_{\leq s})\leftarrow X^1_i,R^1_i,M^1_{\leq s}}
    {\relent{(Y^1_i)_{x_i,r_i,m_{\leq s}}}{(Y_i)_{x_i,r_i,m_{<
    s}}}} \label{eqn:c2}\\
    & = \frac{1}{2}\sum_{i\notin C}  \expec{(x_i,r_i,m_{< s})\leftarrow X^1_i,R^1_i,M^1_{<
    s}}{\relent{(Y^1_iM^1_s)_{x_i,r_i,m_{<s}}}{(Y_i)_{x_i,r_i,m_{<s}}\otimes
    (M^1_s)_{x_i,r_i,m_{<s}}}}\nonumber\\
    &\geq \frac{1}{2}\sum_{i\notin C}  \expec{(x_i,r_i,m_{< s})\leftarrow X^1_i,R^1_i,M^1_{<
    s}}{\mutinf{(Y^1_i)_{x_i,r_i,m_{<s}}}{(M^1_s)_{x_i,r_i,m_{<s}}}}\label{eqn:c3}\\
    & = \frac{1}{2}\sum_{i\notin C}\condmutinf{Y^1_i}{M^1_s}{X^1_iR^1_iM^1_{<s}}.
    \label{eqn:Markovmutual1}
\end{align}

Above we have used Fact~\ref{fact:relative entropy splitting} several times.
Eq.~\eqref{eqn:c1} follows from the definition of $R_i$;
Eq.~\eqref{eqn:c2} follows from the fact that $Y\leftrightarrow
X_iR_iM_{<s}\leftrightarrow M_s$ for any $i$, whenever $s$ is odd;
Eq.~\eqref{eqn:c3} follows from Fact \ref{fact:mutinf is min}.

From a symmetric argument, we can show that when $s\in [t]$ is even,

\begin{equation}\label{eqn: Markovmutual2}
\frac{1}{2}\sum_{i\notin
C}\condmutinf{X^1_i}{M^1_s}{Y^1_iR^1_iM^1_{<s}}\leq \delta k.
\end{equation}
Eq.~\eqref{eqn:Markovmutual1} and Eq.~\eqref{eqn: Markovmutual2}
together imply

\begin{equation}\label{eqn:Markov2}
\sum_{i\notin C}\br{\sum_{s\text{
odd}} \condmutinf{Y^1_i}{M^1_s}{R^1_iX^1_iM^1_{<s}}+\sum_{s\text{
even}}\condmutinf{X^1_i}{M^1_s}{R^1_iY^1_iM^1_{<s}}}\leq 2\delta kt.
\end{equation}

Combining Equations (\ref{eqn:one coordinate close})(\ref{eqn:junk
is independent for Y})(\ref{eqn:mutual inf XY and
M})(\ref{eqn:Markov2}), and making standard use of Markov's
inequality, we can get a coordinate $j\notin C$ such that
\begin{eqnarray}
&&\relent{X^1_jY^1_j}{X_jY_j}\leq12\delta, \label{eqn:one coordinate closeget}\\
&&\expec{(r_j,x_j)\leftarrow R^1_j, X^1_j}
        {\relent{\br{Y_j^1}_{ r_j,x_j}}
        {\br{Y_j}_{x_j}}}\leq12\delta, \label{eqn:junk is independent for Xget}\\
&&\expec{(r_j,y_j)\leftarrow R^1_j, Y^1_j}
        {\relent{\br{X_j^1}_{r_j,y_j}}
        {\br{X_j}_{y_j}}}\leq12\delta\label{eqn:junk is independent for
        Yget},\\
&&\sum_{s \text{
odd}}\condmutinf{X^1_j}{M^1_s}{R^1_jY^1_jM^1_{<s}}+\sum_{s\text{
even}}\condmutinf{Y^1_j}{M^1_s}{R^1_jX^1_jM^1_{<s}}\leq 12\delta_1 c , \label{eqn:mutual inf XYGET and M}\\
&&\sum_{s \text{
odd}}\condmutinf{Y^1_j}{M^1_s}{R^1_jX^1_jM^1_{<s}}+\sum_{s \text{
even}}\condmutinf{X^1_j}{M^1_s}{R^1_jY^1_jM^1_{<s}}\leq12\delta t.\label{eqn:MarkovGet}
\end{eqnarray}
Set $\ve'\defeq\frac{\ve}{125t}$, and
\[\ve_s\defeq\begin{cases}\condmutinf{Y^1_j}{M^1_s}{R^1_jX^1_jM^1_{<s}} &\text{$s\in[t] $ odd,}\\
\condmutinf{X^1_j}{M^1_s}{R^1_jY^1_jM^1_{<s}} &\text{$s\in[t]$ even.}
\end{cases}\] 
\[c_s\defeq\begin{cases}\condmutinf{Y^1_j}{M^1_s}{R^1_jX^1_jM^1_{<s}} &\text{$s\in[t] $ even,}\\
\condmutinf{X^1_j}{M^1_s}{R^1_jY^1_jM^1_{<s}} &\text{$s\in[t]$ odd.}
\end{cases}\] 
By \eqref{eqn:MarkovGet},
$\sum_{s=1}^t\sqrt{\ve_s}\leq\sqrt{12\delta} t $. From Equations
(\ref{eqn:one coordinate closeget})(\ref{eqn:junk is independent for
Xget})(\ref{eqn:junk is independent for Yget}) and Lemma
\ref{lem:relent embeddable} we can infer that  $(X_jY_j)$ is
$(1-10\sqrt{3\delta })$-embeddable in $(X^1_jR^1_j;Y^1_jR^1_j)$.
This, combined with Equations \eqref{eqn:mutual inf XYGET and
M}\eqref{eqn:MarkovGet} and Lemma \ref{lem:simulation} (take $\ve', \ve_s, c_s$ in the lemma to be as defined above and take $XYX'Y'R'M_1' \cdots M_t'$ in the lemma to be $X_jY_jX_j^1Y^1_jR^1_jM_1^1 \cdots M_t^1$) imply the
following (for appropriate constant $\kappa$). There exists a public-coin $t$-message protocol $\Q^1$
between Alice, with input $X_j$, and Bob, with input $Y_j$, with
Alice sending the first message and total communication,
 $$\frac{12\delta_1 c+5t}{\ve'}+\O(t\log\frac{1}{\ve'})<  \mathrm{D}^{(t), \mu}_{\ve}(f)  ,$$
such that at the end Alice possesses $R_AM_1 \cdots M_t$  and Bob
possesses $R_BM_1 \cdots M_t$, satisfying
$$ \onenorm{X_j Y_j R_A R_B M_1 \cdots M_t - X^1_j Y^1_j R^1_j R^1_j M^1_1 \cdots M^1_t} \leq 10\sqrt{3\delta } +3 \sqrt{12\delta} t + 6 \ve' t < \ve/2. $$
 Assume for contradiction that $\prob{ T_{j}=1 \middle\vert \Tsr = 1 }    > 1-\ve/2$.  Consider a protocol $\Q^2$ (with no communication) for $f$ between Alice, with input $X^1_j  R^1_j M^1_1 \cdots M^1_t$, and Bob, with input $Y^1_jR^1_j M^1_1 \cdots M^1_t$, as follows. Bob generates the rest of the random variables present in $Y^1$ (not present in his input) himself since,  conditioned on his input,  those other random variables are independent of Alice's input (here we use Fact~\ref{fact:commindep}). Bob then generates the output for the $j$-th coordinate in $\Q$, and makes it the output of $\Q^2$. This ensures that the success probability of Bob in $\Q^2$ is  $\prob{ T_{j}=1 \middle\vert \Tsr = 1 }    >1 - \ve/2$.  Now consider protocol $\Q^3$ for $f$, with Alice's input $X_j$ and Bob's input $Y_j$, which is a composition of $\Q^1$ followed by $\Q^2$. This ensures, using Fact~\ref{fact:l1 monotone}, that success probability of Bob (averaged over public coins and the inputs $X_jY_j$) in $\Q^3$ is larger than $1 - \ve$. Finally by fixing the public coins of $\Q^3$,  we get a deterministic protocol $\Q^4$ for $f$ with Alice's input $X_j$ and Bob's input $Y_j$ such that the communication of $\Q^4$ is less than  $\mathrm{D}^{(t), \mu}_{\ve}(f)$ and Bob's success probability (averaged over the inputs $X_jY_j$) in $\Q^4$ is larger than $1 - \ve$. This is a contradiction to the definition of $\mathrm{D}^{(t), \mu}_{\ve}(f)$ (recall that $X_jY_j$ are
distributed according to $\mu$). Hence it must be that $\prob{
T_{j}=1 \middle\vert \Tsr = 1 }    \leq 1-\ve/2$. The claim now
follows by setting $i_{r+1} = j$.
\end{proofof}

\subsection*{Open problems}
\label{sec:conclusion}

Some natural questions that arise from this work are:

\begin{enumerate}

  \item Can the dependence on $t$ in our direct product theorem be improved?

  \item Can these techniques be extended to show direct product theorems for bounded-round
  quantum communication complexity?

\end{enumerate}

\bibliographystyle{alpha}
\bibliography{ref}
\end{document}